\newcommand\vldbpagestyle{plain} 
\theoremstyle{definition}
\DeclareMathOperator*{\argmax}{arg\,max}
\DeclareMathDelimiter{(}{\mathopen} {operators}{"28}{largesymbols}{"00}
\DeclareMathDelimiter{)}{\mathclose}{operators}{"29}{largesymbols}{"01}
\newtheoremstyle{theoremdd}
  {\topsep}
  {\topsep}
  {\itshape}
  {0pt}
  {\bfseries}
  {. ---}
  { }
  {\thmname{#1}\thmnumber{ #2}\textnormal{\thmnote{ (#3)}}}
\newtheorem{theorem}{Theorem}
\theoremstyle{theoremdd}
\def\BibTeX{{\rm B\kern-.05em{\sc i\kern-.025em b}\kern-.08em
    T\kern-.1667em\lower.7ex\hbox{E}\kern-.125emX}}
\newtheorem{defn}{Definition}[section]
\newtheorem*{example*}{Example}
\newcommand{\parcur}{ParCuR}
\newcommand{\roulette}{RouLette}
\begin{document}
\title{Real-Time Analytics by Coordinating Reuse and Work Sharing}

\author{Panagiotis Sioulas$^0$}
\affiliation{%
  \institution{Oracle}
  \streetaddress{}
  \city{Zürich}
  \state{Switzerland}
}
\email{panagiotis.sioulas@oracle.com}

\author{Ioannis Mytilinis$^0$}
\affiliation{%
  \institution{Oracle}
  \streetaddress{}
  \city{Zürich}
  \state{Switzerland}
}
\email{ioannis.mytilinis@oracle.com}

\author{Anastasia Ailamaki}
\affiliation{%
  \institution{EPFL}
  \streetaddress{}
  \city{Lausanne}
  \state{Switzerland}
  \postcode{1015}
}
\email{anastasia.ailamaki@epfl.ch}

\begin{abstract}
Analytical tools often require real-time responses for highly concurrent parameterized workloads. A common solution is to answer queries using materialized subexpressions, hence reducing processing at runtime. However, as queries are still processed individually, concurrent outstanding computations accumulate and increase response times. By contrast, shared execution mitigates the effect of concurrency and improves scalability by exploiting overlapping work between queries but does so using heavyweight shared operators that result in high response times. Thus, on their own, both reuse and work sharing fail to provide real-time responses for large batches. Furthermore, naively combining the two approaches is ineffective and can deteriorate performance due to increased filtering costs, reduced marginal benefits, and lower reusability.

In this work, we present \parcur, a framework that harmonizes reuse with work sharing. \parcur\ adapts reuse to work sharing in four aspects: i) to reduce filtering costs, it builds access methods on materialized results, ii) to resolve the conflict between benefits from work sharing and materialization, it introduces a sharing-aware materialization policy, iii) to incorporate reuse into sharing-aware optimization, it introduces a two-phase optimization strategy, and iv) to improve reusability and to avoid performance cliffs when queries are partially covered, especially during workload shifts, it combines partial reuse with data clustering based on historical batches. \parcur\ outperforms a state-of-the-art work-sharing database by $6.4\times$ and $2\times$ in the SSB and TPC-H benchmarks respectively.
\end{abstract}

\maketitle

\pagestyle{\vldbpagestyle}

\footnotetext{This work was done while the author was at EPFL.}
\section{Introduction}

Reusability is a driving factor for many analytical tools, such as dashboards, notebooks, and pipelines. Often, such reusable workloads consist of highly concurrent parameterized queries. Dashboards, for example, produce visualizations by processing several canned queries that are parameterized through UI interactions or other queries \cite{tableau, idebench}. Similarly, analysts rerun data-science notebooks for reproducibility and exploration, often with different parameters  \cite{storynotebook, explorationandexplanation, jupyterrepro, appyters}; hence, multiple queries that transform and analyze data recur. While such applications process large numbers of queries, they are interactive in nature and require low response times for all queries. However, under high concurrency, backend databases struggle to produce responses within a tight timeframe.

Traditionally, there are two approaches to accelerate processing for large batches of recurring queries. On the one hand, we can optimize individual queries. To do so, both commercial and open-source databases can \textit{reuse} materialized results; databases avoid full recomputation and drastically reduce processing time. Often, optimizing for reuse opportunities is automated in the form of caching, recycling, and materialized views and subexpressions \cite{shimcaching, ivanovarecycle, roussopoulosviews, zhou2007efficient, jindal2018selecting}. Nevertheless, materialization is subject to a storage budget and thus leaves outstanding computations. Moreover, as the outstanding computations for different queries are still processed individually, response time is increased with concurrency. 

On the other hand, we can optimize the scalability of batch processing using \textit{work sharing}. Work-sharing databases reduce the total processing time by exploiting overlapping computations across the queries in the batch. However, large numbers of heavyweight shared operators and the fact that everything is recomputed from scratch can violate stringent response time requirements.

Figure \ref{parcur:fig:trends} depicts processing time for a large query batch\footnote{The setup corresponds to Figure \ref{parcur:fig:budget-ab}a with 50\% budget (presented in Section \ref{parcur:sec:caching-exp}).}. Both query-at-a-time (QaT) reuse and work sharing fail to provide fast responses. Reuse eliminates computations by precomputing joins, but suffers from concurrent outstanding processing (i.e., filters on materialized results, non-materialized joins). By contrast, work sharing mitigates the impact of concurrency and reduces the response time but suffers from processing heavy shared joins at runtime.

\begin{figure}[t]
 \centering
	\includegraphics[width=0.8\hsize, page=18, trim=0 0 0 200, clip]{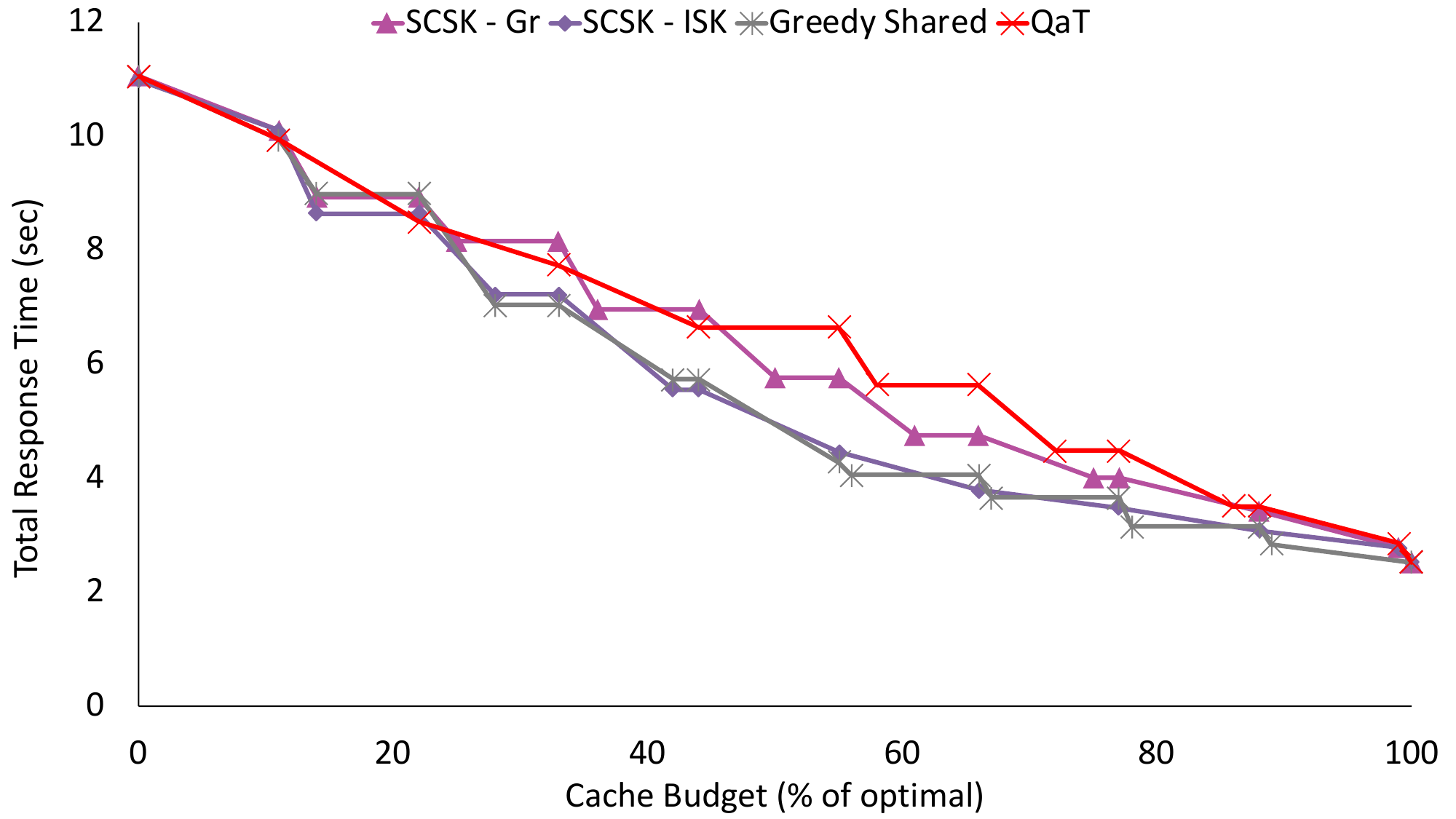}
 \vspace{-3mm}
    \caption{\parcur\ harmonizes reuse and work sharing to speed up recurring batches }
	\label{parcur:fig:trends}
    \vspace{-3mm}
\end{figure}

Individually, both reuse and work sharing fail to process large workloads interactively but still make complementary contributions. Thus, it is attractive to combine the two approaches to exploit their cumulative benefit. However, naively reusing materialized results in a work-sharing database as we would in a query-at-a-time database brings limited benefit and can even degrade performance ("Work-sharing + Reuse" in Figure \ref{parcur:fig:trends}). Reuse in a work-sharing environment is ineffective because i) it eliminates upstream shared operators only when their results are not required by any downstream computation,
ii) as it rewrites only queries that the used materialized results subsume, mismatching (i.e., non-subsumed) queries may recompute, fully or partially, the reused results, hence decreasing benefit -- mismatches become increasingly likely as concurrency is increased, especially during workload shifts -- and
iii) it severely amplifies processing for shared filters.

To enable interactive responses for large parameterized batches, we introduce \parcur\ (Partition-Cut-Reuse), a novel framework that harmonizes reuse with work sharing. To address the limited effectiveness of reuse in work-sharing environments, \parcur\ adapts materialization and reuse techniques across three axes:

\emph{Cut:} Work sharing violates the assumptions of traditional subexpression selection \cite{zhou2007efficient, jindal2018selecting}; thus, existing solutions fail to minimize processing. To increase the impact of reuse, \parcur\ introduces novel materialization and reuse policies that make decisions based on the eliminated shared operators in the work-sharing setup. As eliminating each shared operator depends on downstream decisions, \parcur\ introduces the concept of \emph{cuts}. Cuts represent sets of materialized subexpressions that act synergistically in eliminating more upstream operators. The policies use cuts when evaluating which results to materialize or reuse. \parcur\ proposes approximation algorithms for materialization as well as a cost-based reuse algorithm that maximizes processing-time savings.

\emph{Reuse:} \parcur\ focuses on making reuse efficient, and thus it is imperative to reduce the high processing time for shared filters. To this end, it builds and uses access methods on materialized subexpressions. By building access methods on materialized subexpressions based on frequent predicates, and by using the access methods at runtime, \parcur\ evaluates frequent filters for one batch of tuples at a time, thus amortizing the required processing.

\emph{Partition:} To increase the usability of materialized results in case of mismatches, e.g., during workload shifts, \parcur\ uses partial reuse. It uses the fragments of materialized results that are relevant for each query batch at hand and performs any additional recomputation only as needed, thus relaxing the subsumption constraint; it eliminates shared operators for all queries for the data ranges that materialized results cover. To efficiently identify and access the relevant fragments and the base data for the recomputation, \parcur\ uses partitioning. Nevertheless, by materializing and reusing at the partition-granularity, it creates a dependency between the storage footprint and the partitioning scheme: such materializations may include tuples that are rarely useful if the partition is misaligned with the predicates of the corresponding queries. Hence, to maximize reuse while minimizing footprint, \parcur\ introduces a novel partitioning algorithm that clusters together data that are accessed by similar subexpressions and hence aligns partitions with predicate-subexpression combinations.

\parcur\ incorporates the above techniques in a two-phase framework: i) an offline tuner that optimizes \parcur's state (i.e., partitions, materialized results, access methods) for a target workload and ii) an online executor that, by exploiting the available state, minimizes the processing time for query batches arriving at runtime. 
By adapting and exploiting the available state, \parcur\ makes reuse efficient and effective in work-sharing environments. As Figure \ref{parcur:fig:trends} demonstrates, \parcur\ drastically reduces batch response time. The experiments show that \parcur\ outperforms work sharing by $6.4\times$ and $2\times$ in the SSBM and TPC-H benchmarks, respectively.

We make the following contributions:

\vspace{-3mm}
\begin{itemize}
\item
Choosing materializations using QaT heuristics is ineffective and uses the storage budget suboptimally. We propose a family of materialization policies that, by adapting to the workload's sharing opportunities, improve time savings for the same budget. 
\item
Work-sharing decisions and access patterns affect the benefit of reuse. We propose a cost-based optimization strategy that chooses when and which materializations to inject into each batch's plan such that response time is minimized. 
\item
Naively reusing materializations in work-sharing databases can increase response time considerably. Instead, we propose that materialization should be accompanied by access methods that enable data skipping and filter skipping.
\item
Increasing the usability of materialized results in case of mismatches requires partial reuse. Partition-level materialization and execution enable efficient partial reuse at the expense of storage overhead. We propose a novel partitioning scheme that maximizes reuse while minimizing redundant materialization by aligning partitions to workload patterns.
\end{itemize}

\section{Reuse in Shared Execution}

We provide an overview of the challenges in reusing materializations during shared execution. We first briefly present work-sharing concepts and motivate reusing materializations to reduce recomputation, then highlight the performance pitfalls that reuse introduces when combined with work sharing, and finally outline our solutions. For ease of presentation, we use the following batch as a running example:

\begin{verbatim}
Q1: SELECT SUM(X) FROM A,B,C,D WHERE expr1
Q2: SELECT SUM(X) FROM A,B,E WHERE expr2
\end{verbatim}

\subsection{Shared Execution}

Work-sharing databases accelerate query batches by exploiting overlapping work across queries. To do so, they rely on i) the \emph{global plan} and ii) the \emph{Data-Query model}.



\textbf{Global plan:} The global plan expresses sharing opportunities among different queries. It is a directed acyclic graph (DAG) of relational operators that process tuples for one or more queries, and multi-cast their results to one or more parent operators. Figure \ref{parcur:fig:motivation-new} shows the global plan for Q1 and Q2. For ease of reference, each operator is labeled with a number. Operator $2$ processes $A \bowtie B$ for both queries, and sends results to operators $3$ and $8$, which serve Q1 and Q2, respectively. At the two roots, the global plan produces the results for Q1 and Q2. By processing each operator of the global plan only once, the database shares work across queries and reduces the overall processing time.

\textbf{Data-Query model:} The Data-Query model enables efficient sharing between queries with different selection predicates. Sharing through query re-writing that uses standard relational operators and filters the union of the predicates is expensive as i) it produces and processes redundant tuples, and ii) filters data several times within the plan \cite{precision-sharing}. For example, operator $2$ can join a ``probe'' tuple belonging only to Q1 with a ``build'' tuple only belonging to Q2 and filter it out afterwards. The Data-Query model addresses these two inefficiencies: it annotates each tuple with a \textit{query-set} that indicates to which queries the tuple contributes. Then, specialized \emph{shared} operators process both the actual tuples and the query-sets. This way the database tracks membership for intermediate results, and eliminates redundant tuples early. Therefore, with Data-Query model: i) the global plan shares work on tuples that are common across some but not all queries, and ii) the operators can immediately drop tuples that do not belong to any query.

\subsection{Recomputation Bottleneck}
\label{parcur:motivation:bottleneck}

When using work sharing, processing more queries increases the response time sublinearly, and thus, the total processing time is reduced compared to QaT execution. However, for each submitted query batch, global plan execution always starts from a clean slate. Data flows from the input tables to each query's output, and all shared operators of the global plan are fully processed from scratch.

Recomputation of previously ``seen'' expressions can be critical as the additional processing for handling query-sets renders shared operators particularly time-consuming. For example, shared filters and joins require one or more query-set intersections, the cost of which is increased as a function of the number of queries. Furthermore, shared filters are not simple comparisons but are implemented as joins with predicates using the predicate indices. 

All in all, as global plans often consist of tens of operators, processing accumulates and prevents providing results within a tight time window. Therefore, to offer interactivity, we need to reduce the required computations for each batch.

\subsection{Pitfalls of Combining Reuse and Work Sharing}
\label{parcur:motivation:challenges}

Analytical databases reduce runtime computations by reusing precomputed results. However, we observe that using materializations in work-sharing environments exhibits a set of properties that have not been studied before and which make reuse inefficient. Namely, these properties are: i) \emph{shared cost}, ii) \emph{synergy}, iii) \emph{filter amplification}, and iv) \emph{risk of miss}. We elaborate on each of these properties.

\begin{figure}[t]
 \centering
	\includegraphics[width=0.4\hsize, page=1, trim=200 0 200 0]{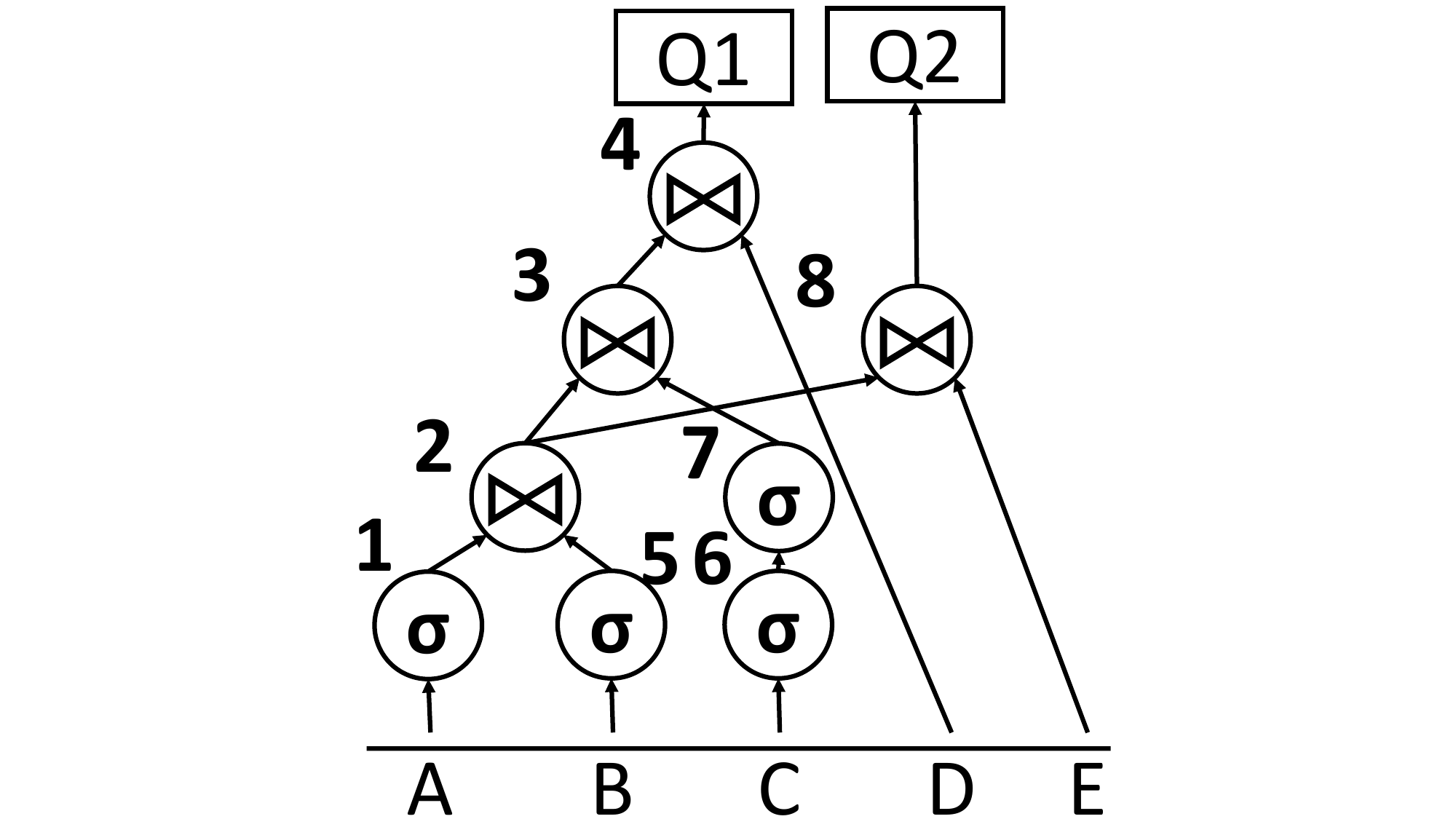}
    \vspace{-5mm}
    \caption{Motivational example: work sharing introduces novel challenges for reuse}
    \vspace{-5mm}
	\label{parcur:fig:motivation-new}
\end{figure}

\noindent \textbf{Shared cost:} \emph{QaT cost models are inaccurate in work-sharing environments}. Work sharing affects both which operators reuse eliminates and their relative importance. On the one hand, reuse eliminates upstream operators only as long as their results are not required by other remaining downstream operators. For example, reusing the results of operator $4$ eliminates operators $3$ and $4$, but operator $2$ is still required for Q2. On the other hand, work sharing across queries diminishes the importance of frequency of occurrence for operators; the savings depend more on the total number of Data-Query tuples processed by the shared operator rather than the number of participating queries. This is contrary to the assumptions of traditional cost models for materializing intermediates, which assume that reuse eliminates all upstream costs and which simply add up the benefit for each affected query. 

\noindent \textbf{Synergy:} \emph{The benefit of individual materializations is amplified.} Materialization decisions affect each other's results differently than they do in single-query plans. Reuse in single-query plans results in diminishing returns. For example, reusing the results of operator $4$ in the original plan eliminates joins $3$ and $4$, whereas reusing the same results in a rewritten plan that already uses operator $3$ only eliminates join $4$. This observation is critical for the design of heuristic materialization algorithms that are based on submodularity. However, diminishing returns are not necessarily the case in global plans. Consider the example where we reuse results for operators $3$ and $8$. We observe a counter-intuitive effect: individually, they eliminate one join each, but together the benefit is amplified, and they eliminate $3$ joins. This effect, which we refer to as \emph{synergy}, marks a departure from traditional materialization and reuse.

\noindent \textbf{Filter amplification:} \emph{Shared filters over materializations dominate the total processing time}. When injecting a materialization into a global plan, the work-sharing database needs to process filters from all the tables participating in the computation. For example, if the database reuses the results of the subquery corresponding to operator $4$, the global plan needs to process $4$ shared filters from tables $A$, $B$, and $C$. Then, the processing time for filters is amplified for two reasons: i) materializations can have a significantly larger cardinality than small dimension tables, and ii) filters must process every materialization where the corresponding table participates (e.g., filters from $A$ are processed on the materializations of both $4$ and $8$). In some cases, reuse deteriorates performance compared to processing the batch from scratch using work sharing.

\noindent \textbf{Risk of miss:} \emph{The probability that the materialization covers all accessed data decreases with the number of queries}. Reuse typically requires that the materialization fully subsumes the subquery that it eliminates. Similarly, the materialization needs to subsume all participating queries to eliminate subplans in global plans. For example, eliminating operator $2$ by reusing its result requires that both Q1 and Q2 can be answered using the materialization. Assume that the materialization only covers \textit{expr1} and \textit{expr1} defines a subset of \textit{expr2}: then, even if Q1 is answered using the materialized subexpression, Q2 fully recomputes the shared operator's result already and thus reuse brings no benefit compared to shared execution. Requiring full subsumption for materialized subexpressions has a high risk of mismatch, especially in case of workload shifts.

\subsection{Harmonizing Reuse and Work Sharing}

To significantly reduce their runtime computations, work-sharing databases need to address inefficiency in reuse. In this work, we harmonize work sharing and reuse: we redesign, based on the above-mentioned properties, the techniques for materializing and reusing precomputed results such that we maximize  eliminated computations and minimize reuse overhead. Harmonization takes place across three axes: i) materialization and reuse policies which address shared cost and synergy, ii) access methods for materializations which address filter amplification, and iii) partial reuse, which addresses the risk of miss.

\noindent \textbf{Materialization and reuse policies:} Due to shared cost and synergy, algorithms for selecting materializations or injecting materializations into plans make suboptimal decisions. Work sharing renders their cost models inaccurate and violates common submodularity assumptions. Hence, harmonization requires novel  materialization and reuse policies that, by taking into account both shared cost and synergy, select materializations that bring higher processing time reduction, given the same budget. We introduce a methodology that evaluates cost reduction using i) the eliminated shared cost in global plans and ii) the novel concept of cuts, that is, sets of materializations that exhibit synergy. We formulate the problem of choosing materializations for a target workload as a variant of the subexpression selection problem \cite{zhou2007efficient, jindal2018selecting}. We show that the materialization problem can be reduced, using cuts, into a Submodular Cover Submodular Knapsack (SCSCK) problem \cite{scskiyer}, for which there exists a family of approximation algorithms. Afterward, we address selecting which materializations to reuse and when in shared execution. We propose a reuse optimization pass that, at runtime, injects into a global plan the materialized subexpressions that maximize cost savings (i.e., eliminated computation minus filtering overhead) for the selected subexpressions.

\noindent \textbf{Access methods:} Filter amplification limits the applicability of reuse as it shrinks the net benefit and may even deteriorate performance. Efficient reuse requires that the processing time for shared filters over materializations is decreased. We reduce processing time for filters using suitable access methods for the workload at hand. By building and using access methods, \parcur\ enables shared execution to evaluate shared filters over one block of tuples at a time instead of processing them on a tuple-by-tuple basis, and thus to amortize the overhead. We build access methods for the target workload through partitioning and then use the created access methods to eliminate filters at runtime (Section \ref{parcur:sec:filter-skipping}).

\begin{table}[t]
\centering
\begin{tabular}{|c|c|c|c|c|}
\hline
                       & \rotatebox{90}{Shared cost\ } & \rotatebox{90}{Synergy\ } & \rotatebox{90}{Filter amplification\ } & \rotatebox{90}{Risk of miss\ } \\ \hline
\textcolor{brown}{Materialization policy}  &     \checkmark        &     \checkmark    &                      &              \\ \hline
\textcolor{brown}{Access methods}          &             &         &    \checkmark                  &              \\ \hline
\textcolor{brown}{Partitioning}            &             &         &                      &       \checkmark       \\ \hline
\textcolor{purple}{Reuse policy}            &      \checkmark       &      \checkmark   &                      &              \\ \hline
\textcolor{purple}{Data \& Filter skipping} &             &         &      \checkmark                &              \\ \hline
\textcolor{purple}{Partition-oriented execution}   &             &         &                      &        \checkmark      \\ \hline
\end{tabular}
\caption{Challenges (columns) and mechanisms (rows) that \parcur\ uses to harmonize reuse and work sharing. Brown rows: offline mechanisms, purple rows: online mechanisms.}
\label{parcur:table:components}
\vspace{-5mm}
\end{table}

\noindent \textbf{Partial reuse:} Strict subsumption limits the applicability of reuse. For this reason, \parcur\ opts for partial reuse: to exploit available materializations for the parts of the data that they cover. During execution, \parcur\ can answer each query by combining computations from parts of different materializations and even from parts of the base data. Computations on disjoint parts of the data that consist of filters, projections, join probes, and aggregations can be combined to produce the full result \cite{yang2021flexpushdown}. Our insight is that, to enable composable computations from different parts of data, planning and execution need to take place at partition-granularity. In addition, the reusability of materializations is maximum when they fully cover the data for a set of partitions. For those partitions, they always subsume the matching partition-local computations and can eliminate the corresponding processing. Hence, \parcur\ performs materialization and reuse at partition-granularity. The materialization policy selects for each materialization a set of partitions to fully cover and injects materializations into each partition's global plan at runtime. However, this scheme creates a dependency between partitioning and the storage overhead for covering the target workload; storage overhead is minimum when partition boundaries are aligned with the queries that the materializations subsume. Thus, due to this dependency, data needs to be partitioned such that each partition's tuples are required by the same computations, which, in turn, require the same materializations. We propose the metric of homogeneity to capture the similarity of computations across each partition's tuples. \parcur\ introduces a partitioning scheme that, by splitting data such that homogeneity is maximized, maps each computation to the data that it concerns and reduces wasteful materialization. \parcur\ uses the selected partitions at runtime in a partition-oriented execution model to enable partial reuse and achieves cost savings that are proportional to the overlap between the runtime and tuning workload.


\subsection{Putting It All Together}

We present \parcur{}, a framework that enables shared execution to effectively take advantage of materialized subexpressions by combining the proposed solutions. 

\parcur's architecture comprises two parts: the \emph{tuner}, and the \emph{executor}. The tuner operates offline. It analyzes a target workload made of historical query batches and adapts the framework's state by employing the \parcur's offline mechanisms: i) it partitions the data based on the access patterns of the target workload, ii) it materializes a set of subexpressions for the given partitions, and iii) it builds new access methods for the materialized subexpressions using finer-grained partitioning. Then, given the available partitioning, materialized subexpressions, and access methods, the executor processes each query batch arriving at runtime: i) it performs shared execution at the level of the available partitions, ii) it decides when and where to reuse materialized subexpressions for each partition, and iii) it uses the available access methods to reduce filter costs using data and filter-skipping. Figure \ref{parcur:fig:workflow} illustrates the end-to-end workflow for both the offline tuner and the online executor. We elaborate on each of these mechanisms in Sections \ref{parcur:sec:access-tuning} and \ref{parcur:sec:learning}. Note that the query batches processed at runtime can be arbitrarily different from historical batches both in terms of access patterns and global plans. In all cases, \parcur\ opportunistically uses the existing state to reduce the response time of runtime batches.

\begin{figure}[t]
\begin{tabular}{@{}l@{}l}
\begin{minipage}{0.5\hsize}
    \centering\includegraphics[width=\hsize, page=1, trim=0 0 0 0, clip]{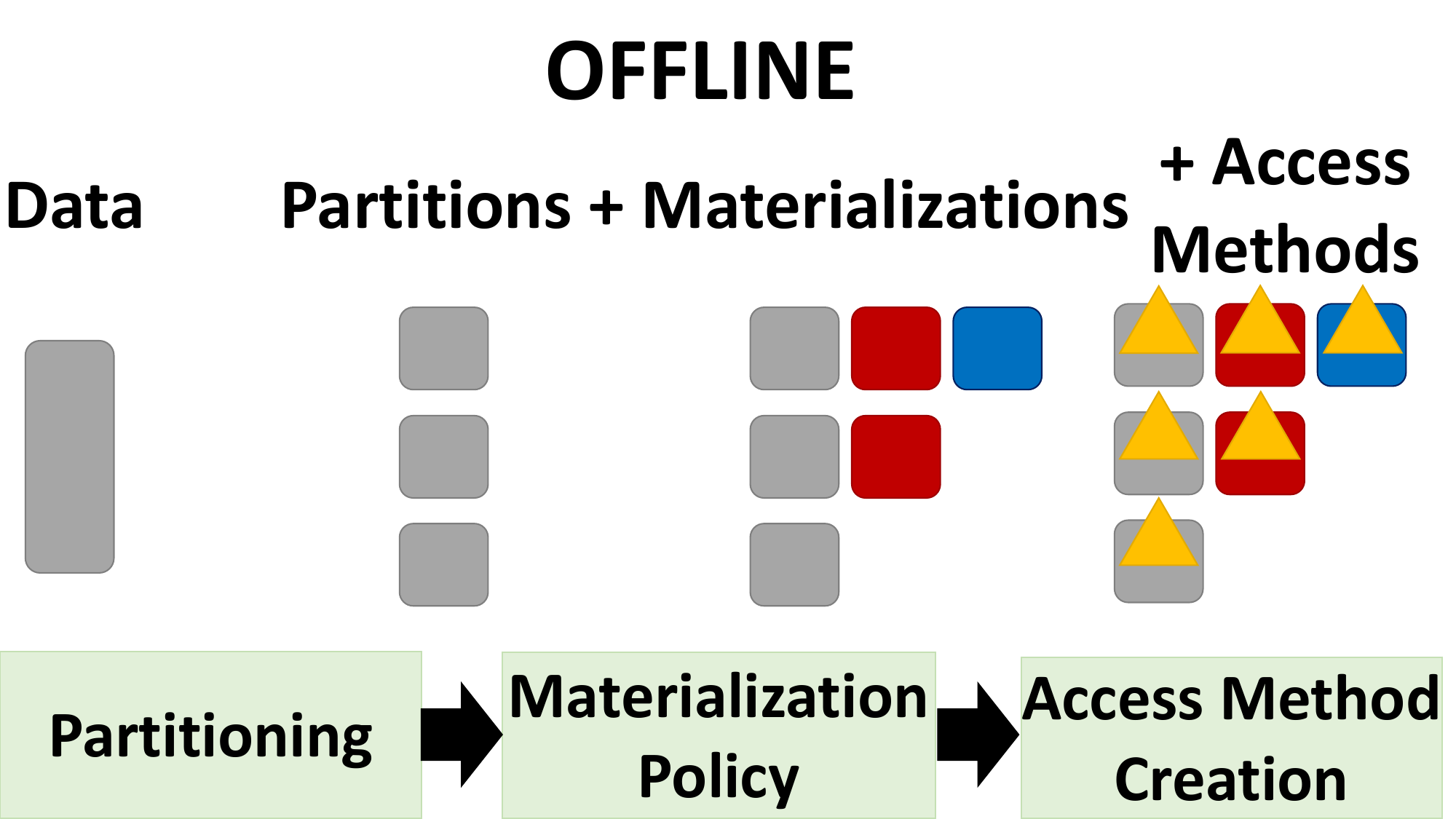}	\\
    \vspace{-2mm}
        \centering {\footnotesize(a)}
        \end{minipage}
        &
    \begin{minipage}{0.5\hsize}
\centering\includegraphics[width=\hsize, page=2, trim=0 0 0 0]{parcur/figures/workflow.pdf}	\\
\vspace{-2mm}
    \centering {\footnotesize(b)}
    \end{minipage}
\end{tabular}
\vspace{-4mm}
\caption{\parcur's workflow in a) the offline tuner and b) the online executor}
\vspace{-5mm}
\label{parcur:fig:workflow}
\end{figure}

\vspace{-2mm}

\section{Tuning \parcur's state}
\label{parcur:sec:access-tuning}

By analyzing a target workload that consists of a sequence of query batches, the tuner repartitions the data, materializes a set of selected subexpressions, and builds access methods on the materialized subexpressions. Tuning takes place offline. After tuning is done, the partitions, the materialized subexpressions and the access methods are exposed to the executor at runtime, which uses them to eliminate recurring computation in subsequent query batches.


In this section, we present each of the steps in the tuner's workflow. Each step's output is the input for the next step in line: partitioning chooses the boundaries for materializing subexpressions and the materialization policy selects the subexpressions on which to build access methods. We first present the partitioning algorithm (Section \ref{parcur:data-layout}), then introduce the materialization policy (Section \ref{parcur:view-selection}) and finally discuss access method construction (Section \ref{parcur:sec:building-access}).

\vspace{-2mm}

\subsection{Workload-driven Partitioning}
\label{parcur:data-layout}

The first step of \parcur's tuner is to partition the data in a way that maximizes the utility of subsequent materializations. To differentiate between this partitioning and any additional data reorganization for building access methods, we name the first step's partitioning as \emph{1st-level partitioning} and any further partitioning as \emph{2nd-level}.

\begin{figure}[t]
 \centering
	\centering\includegraphics[width=0.7\hsize, page=1, trim=0 250 0 0, clip]{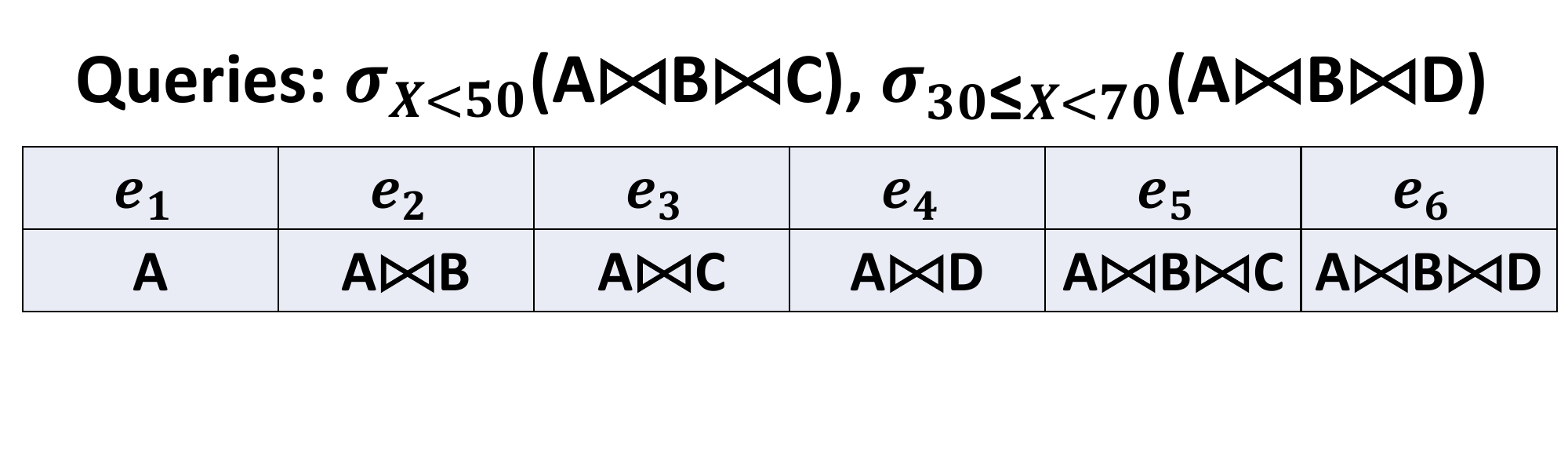} \\
 \vspace{-6mm}
    \centering{\footnotesize(a) Subqueries in a two-query batch} \\
    \includegraphics[width=0.8\hsize, page=3, trim=0 270 0 0]{parcur/figures/clustering-example.pdf}	
     \\
     \vspace{-3mm}
    \centering{\footnotesize(b) Subquery vector for tuples in $[0, 30), [30, 50)$}
    \vspace{-3mm}
	\caption{Two-query example for subquery vectors}
 \vspace{-6mm}
	\label{parcur:fig:clustering-example}
\end{figure}

For partition-granularity materialization to be budget-efficient, all tuples should be processed by similar query patterns, i.e., most of their downstream computation should be the same. Therefore, \parcur\ employs a partitioning scheme that clusters tuples according to query patterns and materializes subexpressions for each partition independently.


Such a partitioning scheme offers three benefits: i) if the query patterns remain the same, materialized subexpressions are almost fully reused, and space is not wasted, ii) materialization is specialized for the sharing decisions of each partition's query pattern, and iii) for the case of \emph{partial reuse} during a workload shift, performance degradation becomes proportional to the magnitude of the shift.


To cluster together data that is processed by similar query patterns, we keep track of processing history for a sample of tuples by maintaining a \emph{subquery-vector} for each tuple. We consider all possible subqueries $e_1, e_2, \dots, e_m$, that appear in a set of historical batches, and mark to which of them each tuple belongs. By \emph{subqueries}, we mean all the join subexpressions (and their reorderings), that exist in each batch, and involve the fact table. For example, considering a batch with two queries, $A \bowtie B \bowtie C,\ A \bowtie B \bowtie D$ and $A$ as the fact table, leads to the subqueries depicted in Figure \ref{parcur:fig:clustering-example}a. We represent subexpressions in different batches as separate subqueries because they do not actually co-occur. Using subqueries is advantageous as it exposes similarities that do not depend on a specific execution plan and naturally represents co-occurrence in the same batch.

We then use the subquery-vectors in order to formulate a tuple-clustering problem based on homogeneity. We assume a matrix $W$, where the $i_{th}$ row of it corresponds to the subquery-vector of the $i_{th}$ tuple: If at least one query with subquery $e_j$ accesses the $i_{th}$ tuple, $W_{i,j}=w(e_j)$, where $w(e_j)$ is a weight assigned to $e_j$. Otherwise, $W_{i,j}=0$. In our implementation, to increase the relative importance of larger subqueries to homogeneity, we set $w(e_j)=|e_j|$, where $|e_j|$ is the number of tables participating in $e_j$. Alternatives assignments can also achieve a similar result. 

Given a set of tuples $T$, we formally define homogeneity as:

$$
H(T, W) = \displaystyle\sum_{t\in T} \frac{\sum_{j=1}^m W_{t,j}}
{max(\sum_{j=1}^m w(e_j) \times u(\sum_{t\in T} W_{t,j}), 1)}
$$

where $u(x)$ is the step function with $u(x)=1$ when $x>0$ and $0$ otherwise. Homogeneity assigns a score to each tuple in $T$ based on the subqueries that access the tuple and is defined as the sum of scores. Each tuple's score is the sum of weights for the subqueries that access tuple $t$ over the sum of weights for the subqueries that access at least one tuple in $T$. Hence, the complexity for computing $H(T,W)$ is $O(m \times |T|)$. The score is maximum (i.e., equals $1$) if all the subqueries that access at least one tuple in $T$ also access $t$. The intuition is that homogeneity is maximum when all tuples in $T$ are accessed by the exact same subqueries. In that case, the utilization of materializations is also maximum; assuming that a subquery's results are materialized and that the historical batches recur as is, reuse exploits all the tuples in the materialization, and no tuple is redundant.

Homogeneity-based partitioning is defined as finding the partitions $\{ p^*_1, p^*_2, \dots, p^*_n \}$ that maximize the aggregate homogeneity:
\vspace{-0.2in}
$$
\{ p^*_1, p^*_2, \dots, p^*_n \} = \argmax_{ \{p_1, \dots, p_n\} } 
 \sum_{i=1}^n 
H(p_i, W) \ s.t.\ \forall p_i\ |p_i| \geq PS_{min}
$$
where $PS_{min}$ is the minimum allowed partition size. Homogeneity-based partitioning finds partitions such that, in each partition, the tuples are accessed by almost the same set of subqueries, and thus, barring a workload shift, the utilization of materializations is high. The partition size constraint ensures that the solution avoids the trivial optimal solution where each tuple forms its own partition.



%




To efficiently compute a solution to homogeneity-based partitioning, we use a space-cutting approach that, similar to \cite{yang2020qd}, forms a tree of cuts in the space of table attributes. Each internal node corresponds to a logical subspace of the table and contains a predicate based on which this subspace is further split: the left child corresponds to the data that satisfies the predicate, whereas the right child to the data that does not. Finally, the leaves of the tree correspond to data partitions, which are the quanta for materialization. The advantage of the space-cutting approach is that it enables routing queries to required partitions based on the predicates of the splits and the queries.


\begin{algorithm}[t]
\SetAlgoLined
\caption{Homogeneity-based Partitioning}
\label{parcur:alg:homogeneity-based clustering}
\SetKwProg{Fn}{Function}{ :}{}
\Fn{PARTITION($partition, W, cuts, PS_{min}$)}{
$output = null$ ; $best = null$ ; $bestScore = null$ \;
$score = H(partition.sample, W)$ \; 
\For {$cut \in cuts$} {
    \If {$intersects(partition, cut)$} {
        $tp, fp = getPartitions(partition, cut)$ \;
        \If {$tp.size < PS_{min}\ or\ fp.size < PS_{min}$} {
            $continue$ \;
        }
        $curr = H(tp.sample,W) + H(fp.sample, W)$ \;
        \If {$best == null\ or\ curr > bestScore$} {
            $best = cut$ ; $bestScore = curr$ \;
        }
    }
}
\If {$best == null\ and\ bestScore > 1.01\times score$} {
    $tp, fp = getPartitions(partition, best)$ \;
    $tres = PARTITION(tp, W, cuts, PS_{min})$ \;
    $fres = PARTITION(fp, W, cuts, PS_{min})$ \;
    $output = Node(best, tres. fres)$ \;
}
\lElse {
    $output = Leaf()$ 
}
}
\Return $output$ \;

\end{algorithm}
\setlength{\textfloatsep}{0ex}

To solve the partitioning problem, we use the greedy Algorithm \ref{parcur:alg:homogeneity-based clustering}. The algorithm runs on a uniform sample of the tuples to keep runtime monitoring overhead low. \parcur\ computes the sample's query pattern matrix by monitoring data accesses across batches and by recording the vector of subqueries for the sample's tuples. When triggered, the greedy algorithm computes the change in the objective function for each candidate cut, that is, a predicate that intersects with the partition at hand (lines 5-9), and finds the locally optimal cut that maximizes the aggregate homogeneity (lines 10-11). Then, the space is partitioned based on the locally optimal cut, and the greedy algorithm is recursively invoked for the two children subspaces and the respective sample tuples (lines 13-16). The greedy algorithm stops when either the relative improvement from the locally optimal cut drops below a threshold, which we set at 1\% (line 7), or all candidate cuts violate the minimum partition size for resulting partitions (lines 7-8).

Let $S_p$ the sample per partition and $S$ the entire sample. The algorithm's complexity  depends on i) the number of recursive invocations, ii) the complexity of $H(S_p,W)$, which is $O(m \times |S_p|)$, and iii) the number $|F|$ of distinct filters in the tuning workload. As the minimum partition size is $PS_{min}$, we can have at most $\frac{|S|}{PS_{min}}$ leaf-partitions, and hence $\frac{2 \times |S|}{PS_{min}}-1=O(\frac{|S|}{PS_{min}})$ invocations. Thus, the complexity of Algorithm \ref{parcur:alg:homogeneity-based clustering} is: $O(\frac{|S|}{PS_{min}} \times m \times |S| \times|F|)$.

Homogeneity-based partitioning results in more efficient use of the storage budget compared to data access-based partitioning schemes, such as Qd-tree \cite{yang2020qd}.




\vspace{-3mm}

\subsection{Materialization Policy}
\label{parcur:view-selection}

1st-level partitioning assumes that query patterns represent the overall workload and thus recur in future batches. To eliminate recomputation in such cases, \parcur\ materializes subexpressions on a per 1st-level partition basis.
 
Due to the interference between reuse and work sharing, a global-plan-aware materialization policy is required. Also, in \parcur\, the policy should consider that partitions process different query patterns. Hence, \parcur\ relies on a new formulation of the subexpression selection problem, which is: i) sharing-aware, and ii) works on \emph{partition-wise global plans}. The optimal solution differs from the one of the classical problem. We call this new problem \emph{Multi-Partition Subexpression Selection for Sharing (MS3)}. We define the \emph{Historical Workload Graph}, which is the input of MS3, and then MS3 itself.


\begin{defn}[Historical Workload Graph]
\emph{Given a fact table $T$, a partitioning $\{p_1, p_2,..,p_n\}$ of $T$, and batches $\{Q_1, Q_2, \dots, Q_m \}$, the historical workload graph $G$ is a graph composed of connected components $G_{i,j}$, $i \in \{1,\dots, n\}$, $j \in \{1,\dots, m\}$, where $G_{i,j}$ is the global plan for $Q_j$ over $p_i$. In the global plan, nodes represent operators (including a pseudo-operator for $T$) and edges represent producer-consumer relationships.}
\end{defn}

\begin{defn}[MS3]
\emph{Let $R(c)$ be the maximum cost reduction that reuse can incur when executing the global plans of the historical workload graph $G$ with an available set of materialized subexpressions $c$, and $B(c)$ the budget required for materializing $c$. If $\mathcal{B}$ is the total memory budget, MS3 is defined as:}

$$
\max_{ c } R(c), \text{s.t.: } B(c) \leq \mathcal{B}
$$

\end{defn}


MS3 is a hard problem and hence it is time-consuming to compute a tractable exact solution. To solve it, we first prove a reduction to \emph{Submodular Cover Submodular Knapsack (SCSK)} problem \cite{scskiyer} and then show how we can use approximate algorithms for SCSK to choose to materialize a set of expressions that achieve a high cost reduction with approximation guarantees.

\subsubsection{Reduction to SCSK}

Let $U$ be a set and $f, g:2^U \rightarrow \mathbb{R}$ be two submodular functions\footnote{Submodularity formalizes diminishing returns. Specifically, a function $h$ is defined as submodular if 
$S \subset S' \Rightarrow h(S \cup \{ s \}) - h(S) \geq h(S' \cup \{ s \}) - h(S')$.},
then SCSK is the optimization problem
$$
\max_{S\subset U}\ g(S),\ s.t.\ f(S) \leq B
$$

To reduce MS3 to SCSK, cost savings in MS3 should be submodular, i.e., adding more materialized subexpressions should result in diminishing returns. While this holds in QaT execution, where each materialization reduces the marginal benefit of other conflicting materializations, it does not hold in shared execution. We observe that shared execution benefits more from materializations in the same path of the global plan where synergy increases cost savings.

The key idea for reducing MS3 to a submodular optimization problem is to materialize subexpressions in groups. We notice that computing cost savings for groups gives us more accurate estimates for the eliminated upstream computations. In addition, synergy between groups is always captured by their \emph{super-group}, i.e., a group that contains their union.

We formulate useful groups of materializations by introducing the concept of \textit{cuts}. Intuitively, in a given global plan, a cut is a set of subexpressions that, if materialized, eliminate all upstream operators between (inclusive) the operators that produce them and a common ancestor, the \textit{anchor}. For example, in Figure \ref{parcur:fig:motivation-new}, the cut composed of operators 3 and 8 also eliminates the upstream operators 1 and 2, which are also anchors. Formally, we define cuts and anchors as follows:

\begin{defn}[Cuts and anchors]
Let $G$ be the historical workload graph. A set of nodes $c \subset V$ is defined as a cut with respect to anchor $a\in V$ if:
\begin{itemize}
    \item $a$ is an ancestor of every $v\in c$.
    \item Every descendant of $a$ is either i) an ancestor of at least one node $v \in c$, or ii) a descendant of exactly one node $v \in c$.
\end{itemize}
We represent the set of all cuts in $G$ as $CUTS(G)$, and for all $c\in CUTS(G)$ we define $BC(c, a)$ as the nodes between (inclusive) the cut's nodes and anchor $a$. The shorthand $BC(c)$ implies using the minimal anchor (i.e., an anchor whose predecessor is not an anchor for $c$).
\end{defn}


Choosing cuts to materialize so as to maximize the eliminated cost in their BC sets is related but not identical to $MS3$. The cost in $BC$ sets is not always equal to the cost reduction from the same materializations in $MS3$, because $MS3$ implicitly includes the savings from super-cuts, that is the union of smaller materialized cuts. However, we prove that solutions in the \textit{cut selection} problem can be enriched such that they are both solutions to \textit{cut selection} and $MS3$ with equal savings. Furthermore, we prove that \textit{cut selection} is an SCSK problem, and thus we can solve it using approximate algorithms. Based on these two properties, \textit{cut selection} gives a solution to $MS3$ with better or equal approximation factor than the one given for SCSK. In the following paragraphs, we formally define \textit{cut selection} and prove the mentioned properties.

First, we introduce some required notation:

\begin{defn}[Domain and Enrichment]
Let $S$ be a set of cuts to materialize. We define the \textit{domain} of $S$ as the set 
\vspace{-1mm}
$$d(S)=\{v.subquery | v \in (\bigcup_{c \in S}\ c) \}$$
\vspace{-1mm}
and the \textit{enrichment} of $S$ as the set 
\vspace{-1mm}
$$ e(S) = \{ c | c \in CUTS(G)\ and\ \forall v\in c  (v.subquery\in d(S)) \} $$
\vspace{-1mm}
The domain represents which results $S$ materializes, and the enrichment represents all cuts that are materialized by materializing $S$.
\end{defn}

\begin{defn}[Cost Reduction and Budget]
Let $S$ be a set of cuts. Also, let $cost(op)$ be the processing cost for an operator $op$ in the global plan. We model the cost reduction due to materializing $S$ as:

$$
\bar{R}(S) = \sum_{op\in O} cost(op),\ where\ O=\bigcup_{c\in S} BC(c)
$$

and the required materialization budget as $\bar{B}(S) = \sum_{v\in d(S)} B(\{v\})$.
$\bar{R}(S)$ is equal to the cost of operators $O$ that are eliminated by materializing the cuts in $S$. Each cut eliminates the shared operators between the cut and the minimal anchor and, by definition, computing $O$ as the union of operators accounts for overlaps between the operators that are eliminated by different cuts.

$\bar{B}(S)$ equals the total budget required for materializing the results of the cuts in $S$. $d(S)$ is by definition the results that $S$ materializes.
\end{defn}

\begin{defn}[Reduced Workload Graph]
Let $S$ be a set of cuts. We define the \textit{reduced workload graph} of $S$ as
$$
G(\emptyset)=<V(\emptyset), E(\emptyset)>=G
$$
$$
G(S)=<V(S), E(S)> = G[V - \bigcup_{c \in S} BC(c)]
$$
where $G[V']$ is the induced subgraph of $G$ for vertices $V'$.

The reduced workload graph represents the global plans for the historical query batches after materializing and reusing the cuts in $S$.

\end{defn}

We define \textit{cut selection} problem as follows:

\begin{defn}[Cut Selection]
Cut selection is defined as the optimization problem of finding a set of cuts $S$ such that:
$$
max \bar{R}(S), \text{s.t.: } \bar{B}(S) \leq B
$$
\end{defn}


Using the above definitions, we prove the following theorems:

\begin{theorem}
Cut selection is a SCSK problem.
\end{theorem}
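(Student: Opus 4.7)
The plan is to match the cut-selection formulation to SCSK by choosing the ground set $U = CUTS(G)$ and showing that the objective $\bar{R}$ and the constraint $\bar{B}$ are both (monotone) submodular set functions on $2^U$. Once this is done, the problem is literally an instance of SCSK as defined, so the theorem follows.

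For the objective, I would proceed by viewing $\bar{R}$ as a weighted coverage function. Define the auxiliary map $\Phi : 2^U \to 2^{V(G)}$ by $\Phi(S) = \bigcup_{c \in S} BC(c)$, so that $\bar{R}(S) = \sum_{op \in \Phi(S)} cost(op)$. Each cut $c$ contributes a fixed ``element set'' $BC(c) \subseteq V(G)$, which makes $\Phi$ a union-of-fixed-sets map indexed by $U$; it is standard that such a map induces a monotone submodular function when composed with any nonnegative modular weight on the underlying elements. Concretely, for $S \subseteq S'$ and $c \notin S'$, one checks $\Phi(S \cup \{c\}) \setminus \Phi(S) \supseteq \Phi(S' \cup \{c\}) \setminus \Phi(S')$, and summing $cost(\cdot)$ (which is nonnegative) over the two marginals yields the submodularity inequality for $\bar{R}$.

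The argument for $\bar{B}$ is structurally identical. Rewrite $d(S) = \bigcup_{c \in S} \{\, v.subquery : v \in c \,\}$, so $d$ is again a union-of-fixed-sets map, this time into the set of subqueries. Since $\bar{B}(S) = \sum_{v \in d(S)} B(\{v\})$ with $B(\{v\}) \geq 0$, this is once more a weighted coverage function on $U$ and hence monotone submodular: adding a cut whose subqueries are already materialized by $S$ contributes zero marginal cost, which is precisely the diminishing-returns property needed. Note the crucial modeling choice that makes submodularity go through: budget is attributed per distinct subquery in $d(S)$, not per cut, so overlapping cuts do not double-charge.

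Combining the two steps gives the SCSK instance $\max_{S \subseteq U} \bar{R}(S)$ s.t. $\bar{B}(S) \leq B$ with $U = CUTS(G)$ and both $\bar{R}, \bar{B}$ monotone submodular. The main subtlety, and the only place where something could go wrong, is ensuring that $\Phi$ and $d$ are honestly union-of-fixed-sets maps, i.e., that the contribution of each $c \in U$ to $\Phi(S)$ (resp.\ $d(S)$) depends only on $c$ and not on which other cuts are present in $S$. Inspecting the definitions of $BC(c)$ (nodes between the cut and its minimal anchor, determined intrinsically by $c$ in $G$) and of the per-cut subquery set confirms this independence, so the coverage-function argument applies cleanly and the reduction is complete.
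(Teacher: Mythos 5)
Your proof is correct and follows essentially the same route as the paper: the paper likewise establishes submodularity of $\bar{R}$ and $\bar{B}$ by showing the marginal contribution of a new cut is a nonnegative-weighted sum over a set ($BC(c)\cap V(S)$, resp.\ $d(\{c\})-d(S)$) that shrinks as $S$ grows, which is exactly your weighted-coverage argument phrased element-wise. Your explicit observation that budget is charged per distinct subquery rather than per cut, and your note on monotonicity, are both consistent with (and slightly more careful than) the paper's presentation.
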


\begin{proof}
 We prove that $\bar{R}$ and $\bar{B}$ are submodular. For a set of cuts $S$ and a cut $c$, it holds that:
$$
\bar{R}(S \cup \{c\}) - \bar{R}(S) = \sum_{op\in O(S)} cost(op)\ s.t.\ O(S)=BC(c) \cap V(S)
$$
and
$$
\bar{B}(S \cup \{c\}) - \bar{B}(S) = \sum_{m\in M(S)} B(\{m\})\ s.t.\ M(S)=d(\{c\}) - d(S)
$$

Let S, S' be two sets of cuts such that $S \subset S'$. Then,

$$
\bar{R}(S \cup \{c\}) - \bar{R}(S) = \sum_{op\in O(S)} cost(op)
$$
and
$$
\bar{R}(S' \cup \{c\}) - \bar{R}(S') = \sum_{op\in O(S')} cost(op)
$$
However, $V(S') \subset V(S)$ \textit{and thus} $O(S') \subset O(S)$. Therefore,
$$
\bar{R}(S \cup \{c\}) - \bar{R}(S) \geq \bar{R}(S' \cup \{c\}) - \bar{R}(S')
$$


Similarly,

$$
\bar{B}(S \cup \{c\}) - \bar{B}(S) = \sum_{m\in M(S)}
$$
and
$$
\bar{B}(S' \cup \{c\}) - \bar{B}(S') = \sum_{m\in M(S')}
$$
Then, $d(S) \subset d(S')$ \textit{and thus} $M(S') \subset M(S)$. Therefore,
$$
\bar{B}(S \cup \{c\}) - \bar{B}(S) \geq \bar{B}(S' \cup \{c\}) - \bar{B}(S')
$$


Therefore, both f and g are submodular.
\end{proof}

\begin{theorem}
If $S$ is a solution to cut selection, then $e(S)$ is also a solution to cut selection with $\bar{R}(e(S)) \geq \bar{R}(S)$.
\end{theorem}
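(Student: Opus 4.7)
The plan is to establish feasibility and improvement separately, and both will follow from two elementary observations about how $e(\cdot)$ interacts with $d(\cdot)$ and with $\bigcup_{c} BC(c)$.

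First I would verify the ``containment'' step: $S \subseteq e(S)$. This is immediate from the definitions, because any cut $c \in S$ has every vertex $v \in c$ satisfying $v.subquery \in d(S)$ by construction of $d(S)$, so $c$ meets the membership condition for $e(S)$.

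Second I would prove the key identity $d(e(S)) = d(S)$. The inclusion $d(S) \subseteq d(e(S))$ follows from $S \subseteq e(S)$ and the monotonicity of $d$ in its argument (taking the union over a superset of cuts can only enlarge the resulting set of subqueries). For the reverse inclusion $d(e(S)) \subseteq d(S)$, take any subquery $q \in d(e(S))$; by definition of $d$ there exist $c' \in e(S)$ and $v \in c'$ with $v.subquery = q$, and by definition of $e(S)$ we have $v.subquery \in d(S)$, hence $q \in d(S)$. This identity is the heart of the argument: although $e(S)$ may contain many more cuts than $S$ (for instance all the ``super-cuts'' responsible for synergy), none of them introduces a new materialized result.

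With these two facts in hand, the two required statements fall out. For feasibility, $\bar{B}$ is defined purely as a sum over $d(\cdot)$, so $\bar{B}(e(S)) = \sum_{v \in d(e(S))} B(\{v\}) = \sum_{v \in d(S)} B(\{v\}) = \bar{B}(S) \leq B$, and thus $e(S)$ satisfies the budget constraint of cut selection. For improvement, the operator set whose costs contribute to $\bar{R}(\cdot)$ is $\bigcup_{c \in \cdot} BC(c)$, which is monotone under inclusion of cut sets; combined with $S \subseteq e(S)$ and the assumed non-negativity of $cost(\cdot)$, this yields $\bar{R}(e(S)) \geq \bar{R}(S)$.

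The only step requiring care is the reverse inclusion $d(e(S)) \subseteq d(S)$, and I expect this to be the main (and essentially the only) obstacle: it is conceptually the place where one must unwind the definition of enrichment and confirm that the extra cuts admitted by $e(\cdot)$ are precisely those whose vertices reuse subqueries already present in $d(S)$, so that no new materialization is ever introduced. Everything else reduces to monotonicity of sums over non-negative quantities.
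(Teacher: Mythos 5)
Your proposal is correct and follows essentially the same route as the paper's proof: feasibility via $\bar{B}(e(S)) = \bar{B}(S)$ (the paper asserts this directly where you justify it through the identity $d(e(S)) = d(S)$), and improvement via $S \subseteq e(S)$ together with monotonicity of $\bigcup_{c} BC(c)$ under inclusion of cut sets. Your explicit verification of $d(e(S)) \subseteq d(S)$ is a welcome elaboration of a step the paper leaves implicit, but it is not a different argument.
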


\begin{proof}

By definition, $S \subset e(S)$. It also holds: \\ $\bar{B}(e(S)) = \sum_{m \in d(S)} B(\{m\}) = \bar{B}(S)$.
Therefore $\bar{B}(e(S)) \leq B$ and $e(S)$ is a solution to cut selection.

Furthermore, $\bar{R}(e(S)) = \sum_{op\in (\bigcup_{c \in e(S)} BC(c))}$
However, $S \subset e(S) \Rightarrow \displaystyle\bigcup_{c \in S} BC(c) \subset \bigcup_{c \in e(S)} BC(c)$. So, $\bar{R}(e(S)) \geq \bar{R}(S)$.
\end{proof}

\begin{theorem}
For every $e(S)$, it holds that $R(d(S)) = \bar{R}(e(S))$
\end{theorem}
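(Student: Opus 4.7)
The plan is to establish the identity by double-inclusion on the set of eliminated operators. Since both $R(d(S))$ and $\bar{R}(e(S)) = \sum_{op \in O} cost(op)$ with $O = \bigcup_{c \in e(S)} BC(c)$ are sums of operator costs, it suffices to show that an operator $op$ is eliminable by reusing the subqueries in $d(S)$ if and only if $op$ lies in $BC(c)$ for some $c \in e(S)$; the cost equality then follows immediately.

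For the inequality $R(d(S)) \geq \bar{R}(e(S))$, I would exhibit a concrete reuse strategy that attains the right-hand side. Pick any $c \in e(S)$; by the definition of enrichment, every $v \in c$ satisfies $v.subquery \in d(S)$, so the materialized result at each cut node is available in $d(S)$. Since $c$ is a cut with respect to its minimal anchor $a$, the two cut axioms guarantee that $c$ forms a frontier: every descendant of $a$ is either an ancestor of some cut node or a descendant of exactly one cut node. Substituting the materialized results at the cut nodes therefore bypasses the entirety of $BC(c)$. Taking the union over all $c \in e(S)$ eliminates every operator in $\bigcup_{c \in e(S)} BC(c)$, which gives $R(d(S)) \geq \bar{R}(e(S))$.

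For the reverse inequality $R(d(S)) \leq \bar{R}(e(S))$, I would argue that any operator eliminated by an optimal reuse strategy over $d(S)$ lies in $BC(c')$ for some $c' \in e(S)$. Suppose reuse eliminates $op$ by substituting materialized results at a set of nodes $c'$ whose subqueries all lie in $d(S)$. Taking $c'$ to be a minimal such frontier dominating every downstream use of $op$, and letting $a$ be the deepest ancestor whose downstream subgraph is entirely covered by $c'$, one verifies that $c'$ satisfies both cut axioms with respect to $a$: $a$ dominates each $v \in c'$, and every descendant of $a$ is either upstream of some $v \in c'$ or downstream of exactly one such $v$. Since every subquery in $c'$ lies in $d(S)$, we conclude $c' \in e(S)$ and $op \in BC(c', a)$, yielding the inclusion and hence the claimed equality.

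The main obstacle is making the second direction rigorous. Translating the informal statement ``reuse eliminates $op$'' into the existence of a concrete cut $c' \in e(S)$ with $op \in BC(c')$ is subtle because of DAG branching, where different downstream uses of $op$ can be served by different materializations, and the axiom requiring that every descendant of $a$ be a descendant of \emph{exactly one} cut node must be verified carefully. I would handle this by constructing $c'$ as the minimal antichain of materialized nodes whose downstream closures jointly cover every descendant of $op$ in the plan, and by choosing the anchor $a$ as the deepest node all of whose downstream paths pass through $c'$; the cut's frontier property then follows from the minimality of $c'$, and its membership in $e(S)$ is immediate from the choice of nodes.
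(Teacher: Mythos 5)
Your overall strategy---turning the cost identity into a set identity between the operators eliminable from $d(S)$ and $\bigcup_{c\in e(S)} BC(c)$, proved by double inclusion---is exactly the paper's, and your forward inclusion matches the paper's argument essentially verbatim. The divergence is in the reverse inclusion, which you correctly identify as the hard step but leave open: the ``minimal antichain of materialized nodes'' plus ``deepest anchor'' construction is more machinery than is needed, and as phrased it is also slightly misaimed---the witnesses that $op$ is eliminable are the materialized \emph{descendants} of $op$, and the descendants of $op$ that are \emph{ancestors} of those witnesses are not in their downstream closure, so ``downstream closures jointly cover every descendant of $op$'' is not the right covering condition. The paper sidesteps all of this with a simpler choice: it first characterizes the eliminated set recursively ($t(S)$ contains the nodes that produce a subquery in $d(S)$ or all of whose successors lie in $t(S)$), and then, for each eliminated operator $a$, anchors the cut at $a$ \emph{itself} and takes $c_a$ to be \emph{all} descendants of $a$ whose subqueries lie in $d(S)$---no minimality, no antichain, no search for the deepest anchor. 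With that choice, $a \in BC(c_a, a)$ and $c_a \in e(S)$ are immediate, and the only thing left to check is the frontier axiom: if a descendant $a'$ of $a$ is downstream of no node of $c_a$, then because $a \in t(S)$ every node on the path from $a$ to $a'$ is eliminated, which forces $a'$ to be an ancestor of some node of $c_a$. If you replace your construction with this one, your argument closes; do note, though, that the ``descendant of \emph{exactly one} cut node'' clause of the cut definition still deserves a word when materialized nodes are nested or share descendants---neither your sketch nor the paper's proof addresses it explicitly.
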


\begin{proof}
Let $S$ be a set of cuts. We represent the eliminated operators in the original MS3 problem when $S$ is materialized as $t(S)$. Formally, $t(S)$ is the set of all nodes whose operators produce $d(S)$ or all their successors belong to $t(S)$. Then:

$$
R(d(S)) = \sum_{op \in t(S)} cost(op)
$$

We now prove that $t(S) = \bigcup_{c \in e(S)} BC(c)$.

Let $c'\in e(S)$. Then, $c' \subset d(S)$, and $\forall v\in BC(c')$ it holds $v \in t(S)$ and thus $BC(c') \subset t(S)$. It follows that $\displaystyle\bigcup_{c' \in e(S)} BC(c') \subset t(S)$.

Also, let $a\in t(S)$ and $c_a$ all the descendants of $a$ that belong to $d(S)$. Then, $c_a$ is a cut with anchor $a$, as the two conditions in the definition of cuts are true: i) $a$ is an ancestor for all nodes in $c_a$, and ii) assume there is a descendant of $a$, $a'$, that is not a descendant of any node in $c_a$. Then, $a'$ is an ancestor of at least one node in $c_a$ because $a\in t(S)$ (otherwise, the nodes in the path from $a$ to $a'$ should not be in $t(S)$). Therefore, $c_a$ is a cut, $a \in BC(c_a)$ and $t(S) \subset \displaystyle\bigcup_{c' \in e(S)} BC(c')$.

Thus $t(S) = \displaystyle\bigcup_{c \in e(S)} BC(c)$ and:

$$
R(d(S)) = \displaystyle\sum_{op \in t(S)} cost(op) = \displaystyle\sum_{op \in \bigcup_{c \in e(S)} BC(c)} cost(op) = \bar{R}(e(S))
$$
\end{proof}

\subsubsection{Approximating MS3}

\parcur's tuner chooses subexpressions to materialize by solving cut selection for historical batches. The selection process has two steps: i) the tuner constructs the workload graph and computes the cuts and their corresponding $BC$ sets, ii) the tuner runs an algorithm for solving the cut selection instance for the computed cuts. The subexpressions in the selected cuts are then materialized and used in subsequent batches.

The tuner currently implements two approximate algorithms for solving SCSK, greedy (Gr) and iterative submodular knapsack (ISK) \cite{scskiyer}. We briefly present the properties of the two algorithms as presented in the work of Iyer et al. \cite{scskiyer}.

\noindent \textbf{Gr}:  Gr is a greedy algorithm. At each step, it chooses the cut with the highest marginal benefit that can fit in the budget and adds it to the solution. Gr's complexity is $O(|CUTS(G)|^2)$, and in practice it takes few msecs. Gr provides an approximation factor:
$
1-(\frac{K_f - 1}{K_f})^{k_f}
$,
where $ K_f=\{max_{S\subset U}\{|S| | f(S) \leq B\}$ and $k_f=\{min_{S\subset U}\{|S| | f(S) \leq B \wedge f(S\cup\{j\}) > B \}$. Indeed, Gr is inefficient when few cuts can saturate the budget.

\noindent \textbf{ISK}: ISK is a fixed point algorithm. In each iteration, it combines partial enumeration with greedy expansion; it chooses between $\binom{|CUTS(G)|}{3}$ candidate solutions, where each candidate fixes the first three cuts and chooses the rest using a greedy algorithm. At each step, the greedy algorithm chooses the cut with the highest ratio of marginal benefit to required budget. The solution in each iteration affects the budget calculation for the next iteration. ISK's complexity is $O(|CUTS(G)|^5)$ and can run for hundreds of seconds for a few hundreds of cuts. ISK provides a constant factor $1-e^{-1}$ for the solution of $\{max_{S\subset U} g(S) | f(S) \leq \frac{b}{K_f}\}$ and a bicriterion guarantee if we run it with a larger budget constraint \cite{scskiyer}.





\vspace{-0.1in}
\subsection{Building Access Methods}
\label{parcur:sec:building-access}

At runtime, materialized subexpressions are accessed at a per-partition level. Nevertheless, they still need to be scanned and filtered based on the predicates of the running queries. The processing time for shared access and filtering of base and cached data can dominate the total processing time. \parcur\ further reduces both data access- and filtering costs by reorganizing data within each partition using multidimensional range partitioning. We refer to this finer-grained partitioning as \emph{2nd-level partitioning}.


Multidimensional range partitioning can enable efficient data access that reduces accesses during scans, as it enables data skipping. Furthermore, by cutting data across values that are frequently used in predicates, it can be used to statically evaluate frequent filters for a whole partition. To build the partitions, we iteratively subpartition data across the predicates values of one attribute at a time. The resulting subpartitions inherit query homogeneity from the \emph{1st-level} partitioning and also reduce data-access costs. From this point on, we differentiate the partitions derived from the \emph{2nd-level} partitioning by calling them \emph{blocks}.
\vspace{-0.1in}
\section{Reuse-aware Shared Execution}
\label{parcur:sec:learning}

At execution time, \parcur\ takes advantage of the constructed partitions and materialized subexpressions and optimizes query processing in three levels: First, it uses data and filter skipping to identify the queries that access each partition and reduces filtering costs. Second, it adopts a partition-oriented execution paradigm that plans and optimizes each partition independently; thus, exposing different opportunities per partition. Third, \parcur\ introduces a cost-based optimization framework that chooses which materializations to inject into each partition's plan.


\subsection{Data and Filter Skipping}
\label{parcur:sec:filter-skipping}

\parcur\ uses 2nd-level partitioning to reduce data access and filtering costs. To do so, for each block, it identifies i) which queries process the block, and ii) which predicates have the same value for all tuples in the block. Then, during execution, it skips 2nd-level partitions that are not processed by any query, and eliminates filters whose predicates are invariant across the block. Both optimizations occur on both the fact table and the materializations, and can drastically reduce batch response time.

As the data is organized by cutting the data space, each block's boundaries are defined by a range along each attribute. Then, if the range is known, the above analysis can be done statically. Concretely, a query's predicate is invariant when its value range either subsumes (always true) or does not overlap (always false) with the block's range. Moreover, a query \emph{skips a block} if at least one of its predicates always evaluates to false (no overlap).  For example, the query \texttt{SELECT COUNT(*) FROM T WHERE x > 8} skips block $5 \leq x < 7$, as the two ranges do not overlap. Similarly, for the same block, the predicate of query \texttt{SELECT COUNT(*) FROM T WHERE x > 4} is true across the whole block and, thus, it is redundant to evaluate it for every tuple.

The above logic is implemented by maintaining zone-maps \cite{graefe2009fast}: a lightweight index that stores min-max statistics for each attribute. During the table scan, for each block, \parcur\ compares the corresponding ranges against the shared filter predicates to identify which queries do not overlap with this block (data skipping), and which are satisfied by the entire block (filter skipping).  The remaining ambivalent filters are processed using the global plan.

\vspace{-0.1in}
\subsection{Partitioned Execution}
\label{parcur:sec:partitioned-execution}

\parcur\ optimizes each 1-st level partition independently to i) exploit partition-specific materializations, and ii) enable partial reuse by decoupling planning between partitions. To do so, it introduces a two-phase partition-oriented execution model. First, it computes the shared state between partitions such as hash tables on dimensions and data structures for aggregation. Next, it executes each partition independently. For each partition, \parcur\ identifies which queries process the partition using the same data-skipping mechanism as above. Then, it chooses a global plan that is specialized for the queries and materializations of the partition at hand. Finally, partial results from each partition are merged together in the output operators such as projections, aggregations, and GROUP-BYs. To reduce aggregation overheads, our implementation preaggregates partial results at the thread-level. Since shared execution processes subqueries that comprise selection, projection, join probe, and potentially aggregation operators, combining partial results produces the final output \cite{yang2021flexpushdown}.

Partial reuse is feasible because the output operators are oblivious to each partition's planning decisions. When query patterns recur with minor shifts, they mostly process their designated 1-st level partitions and spill over only to few neighboring partitions. Then, \parcur\ processes the bulk of the processing using materializations and addresses spill-overs with selective computations. Hence, in case of a workload shift, performance degradation becomes proportional to the magnitude of the shift, and thus \parcur\ avoids suffering a performance cliff.

\subsection{Injecting Materializations in Global Plans}
\label{parcur:view-aware-sharing}

For each partition, \parcur\ optimizes and processes a global plan that exploits the available materializations and access methods as well as sharing opportunities. However, making all planning decisions in a unified optimization framework scales poorly. To this end, \parcur\ adopts a two-phase optimizer: it first finds the best possible global plan that only uses work-sharing, and then, it improves it by optimally substituting shared operators with materialized views.


\subsubsection{Two-phase optimizer}
\label{parcur:sec:two-phase-opt}

Benefits from reuse and work sharing are interdependent: the marginal benefit from reuse, if any, depends on available sharing opportunities and, also, the opportunities from downstream work sharing between queries are contingent on answering them using the same materialization. Thus, it is tempting to formulate a unified optimization problem in order to find a globally optimal plan. However, sharing-aware optimization already has a very large search space, and thus enriching it with reuse planning decisions is prohibitive.

To incorporate work sharing and reuse in a scalable and practical manner, the optimizer needs to restrict the search space. \parcur's optimizer focuses on ensuring better performance than pure work sharing and on avoiding performance regression. Thus, the optimizer uses two phases. In the first phase, the optimizer chooses a \textit{baseline} global plan that uses work sharing. Then, in the second phase, the optimizer improves on the baseline plan by rewriting it to reuse materializations. Finally, \parcur\ processes the resulting plan, which combines reuse and work sharing.

\subsubsection{Reuse phase}
\label{parcur:sec:injecting-reuse}

The reuse phase is based on the observation that reuse replaces operators from the baseline plan with filters on materializations. Hence, the goal is to find which subexpressions, if reused, can maximize the difference between eliminated computations and filtering costs. For each cut $c$, we can estimate this difference, which we call \textit{benefit}, as:
$$
benefit(c, a) = \sum_{op\in BC(c, a)} cost(op) - \sum_{v\in c} (c_f \times |RF(v)| \times v.size)
$$
where $cost(op)$ of operator $op$ in the baseline plan, $RF(v)$ are the runtime filters on subexpression $v$ after filter-skipping in the current partition, $v.size$ is the number of tuples for the subexpression in the current partition and $c_f$ is a constant for estimating filtering costs per tuple as a linear function of the number of runtime filters $|RF(v)|$. $benefit(c, a)$ represents the net benefit of reusing $c$ with respect to anchor $a$ as the different between the cost of eliminated operators between $c$ and $a$ and the overhead for accessing and filtering $c$'s materializations. The optimizer has all this information at the time of running the reuse phase.

In order to choose which subexpressions to reuse, the reuse phase, which we show in Algorithm \ref{parcur:alg:reuse-phase}, performs a post-order traversal of the baseline plan and transforms the plan. When visiting a node, the traversal first processes the node's successors and merges their rewrite decisions (lines 4-6). Then, the algorithm finds the best cut (i.e., the cut with the highest benefit) that can eliminate the current node. If all of the node's successors are eliminated or are anchors for cuts, then the algorithm computes the best cut of downstream subexpressions by merging the cuts of the remaining successors (lines 9-12). If the node corresponds to a materialized subexpressions, the algorithm also considers the cut that consists of the node's results (lines 15-16). Finally, if the best cut provides net gain, the rewrite is applied immediately (lines 17-19), and otherwise the best cut is propagated to upstream nodes.

\begin{theorem}
Given a plan, Algorithm \ref{parcur:alg:reuse-phase} makes optimal view injection.
\end{theorem}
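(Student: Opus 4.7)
My plan is to prove optimality by structural induction on the baseline plan, with the induction proceeding along the post-order traversal order used by Algorithm \ref{parcur:alg:reuse-phase}. Recall that at each visited node the algorithm either applies a cut at the current node (as anchor) when doing so yields positive net benefit, or propagates the best candidate cut upward. I would maintain the invariant that, after processing the subtree rooted at $v$, the total benefit of cuts already applied inside the subtree, plus the maximum benefit realizable from any pending cut that is still propagated out of $v$, is at least the total benefit of any feasible injection whose applied cuts have anchors lying in or reaching through that subtree.

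The base case handles leaves of the plan: either the leaf is materialized and initialises a singleton candidate $\{v\}$ per lines 15--16, or no cut is available; both are trivially optimal. For the inductive step at an internal node $v$, I would split into two cases based on the condition on lines 9--12. If not all of $v$'s successors are eliminated or carry a pending cut, then by the definition of cuts no cut with anchor at or above $v$ can include $v$, so any propagated child cut must be applied locally; by induction each such application is optimal within the corresponding subtree, and committing them at their maximal feasible anchors is forced. If every successor either is eliminated or carries a pending cut, the algorithm merges them into a combined candidate and, when $v$ is materialized, also considers the singleton $\{v\}$ candidate, then either applies or propagates the best one.

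The key exchange argument is needed to justify the apply/propagate decision of lines 17--19. Since the filter-cost term in $benefit(c,a)$ depends only on $c$ and not on $a$, and since $BC(c,a) \subseteq BC(c,a')$ whenever $a$ is a descendant of $a'$, extending the anchor upward monotonically increases benefit, so propagating is weakly preferred whenever it is feasible; conversely, when propagation becomes infeasible (a sibling of some ancestor lacks a cut), the algorithm commits at the largest feasible anchor, capturing the maximum possible benefit for that cut. A short exchange argument then shows that any alternative injection using a different cut or anchor over the same $BC$ region can be swapped with the algorithm's choice without strictly increasing total benefit, completing the inductive step.

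The main obstacle will be formalising synergy across sibling subtrees: when several pending child cuts are merged at $v$, the benefit of the merged cut strictly dominates the sum of their individual benefits because it additionally captures $cost(v)$ and any shared ancestors, so the algorithm must never forgo such a merge when feasible. I expect most of the technical work to go into verifying that the ``best cut'' computation in lines 9--16 faithfully enumerates all relevant candidates (merged-cut, singleton-materialized, and propagated options), and that the algorithm's choice of the highest-benefit candidate at each node, combined with the monotonicity of $BC$ in the anchor, rules out the existence of an alternative injection with strictly greater total benefit. Once these two properties are in hand, applying the invariant at the root yields the claimed optimality.
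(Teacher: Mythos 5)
Your overall skeleton --- induction over the plan with the hypothesis that each downstream subplan is already optimally rewritten, a case split on whether the current node is eliminated, and the observation that merging sibling cuts additionally captures $cost(v)$ --- is essentially the paper's argument. The paper phrases it as a comparison $\Delta = \sum_{s}\left(DC(s.optPlan)-DC(s.bestPlan)\right)+(x-y)\,cost(v)$ between the algorithm's plan and an arbitrary optimal one, and reduces everything to showing that whenever the optimal plan drops $v$ the condition on line 17 also fires, either via $benefit(\{v\},v)>0$ or via $\sum_i benefit(c_i,s_i)+cost(v)\geq 0$ for the propagated sub-cuts with negative benefit.

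The genuine gap is in your justification of the apply-versus-propagate step, which rests on a mischaracterization of lines 17--19. The algorithm does \emph{not} ``propagate whenever feasible and commit at the largest feasible anchor'': it commits \emph{eagerly}, as soon as $benefit(v.bestCut,v)\geq 0$, and it propagates only while the benefit is still negative; moreover, when propagation becomes infeasible (a surviving sibling with a null cut forces $v.bestCut=null$ in lines 9--11) the pending cut is \emph{discarded}, not committed --- committing it there, as your description implies, would inject a cut with negative net benefit. Your anchor-monotonicity lemma ($BC(c,a)\subseteq BC(c,a')$ for $a$ a descendant of $a'$) therefore proves optimality of a different, lazier algorithm. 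To repair the argument you must show that eager commitment loses nothing: after $REWRITE$ removes $BC(c,v)$ from $bestPlan$, every upstream node all of whose consumers have been removed is itself dropped by the $atLeastOne$ bookkeeping, so the eliminated set coincides with what committing at the maximal anchor would give. That equivalence (or the paper's direct $\Delta$-comparison, which sidesteps it) is the missing step; without it the exchange argument does not apply to the algorithm as written.
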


\begin{proof}
By induction on the plan size. \textbf{Base step:} Single-node plan. If reuse is beneficial, the plan is rewritten. Otherwise, it is optimal and stays as is. \textbf{Induction step:} If it holds for plan size $\leq k$, it also holds for $k+1$. We assume a single root in the plan. If the plan consists of multiple connected components, then separately solving for each component is trivially optimal. The first visited node is the root. Each downstream subplan $\leq k$ nodes, so the algorithm minimizes the cost. Let each node have an attribute $optPlan$ that represents the optimal downstream plan and $DC(plan)$ be the cost for an optimized downstream plan. Before line \ref{line:alproof}, we have:

\setlength{\textfloatsep}{0ex}
$$
\Delta = \sum_{s \in succ} (DC(s.optPlan) - DC(s.bestPlan)) + (x-y) cost(v)
$$

where $x,y\in\{0,1\}$ represent if $v$ is part of the plan. Then, there are the following two cases:
   
Case1: if $x=1$ or $y=0$, then $\Delta \geq 0$. Thus, $bestPlan$ is optimal.

Case2: if $x=0$ and $y=1$, we prove that the algorithm eliminates $v$ in line \ref{line:alproof} and $\Delta \geq 0$ for the new $bestPlan$. Since $x=0$, there exists a cut $c$ with anchor $v$.

- If $benefit(\{v\},v) > 0 \Rightarrow \Delta \geq 0$ for the new $bestPlan$.

- Otherwise, REWRITE needs to happen in the downstream cut. Let $s_1, s_2, \dots, s_p$ be $v$'s successors and $c_1, c_2, \dots, c_p$ the corresponding sub-cuts. Since $optPlan$ is optimal: $\sum  benefit(c_i,s_i) + cost(v) \geq 0$, where $i \in \{i|benefit(c_i,s_i) < 0 \}$.
Thus, the merged cuts from the successors can eliminate $v$ and the new $bestPlan$ is optimal.

\end{proof}

\begin{algorithm}[t]
\SetAlgoLined
\caption{Reuse Optimization Phase}
\label{parcur:alg:reuse-phase}
\SetKwProg{Fn}{Function}{ :}{}

\Fn{REUSE\_OPT\_REC($v$)}{
    \label{line:reuse-phase-rec}
    $v.bestPlan=\emptyset$ ; $v.bestCut= (!v.succ.empty()) ?\ \emptyset : null$ \;
    $atLeastOne = (v.succ.empty())$ \;
    \For {$s \in v.succ$} {
        $REUSE\_OPT\_REC(s)$ \;
        $v.bestPlan =  v.bestPlan \cup s.bestPlan$ \;
        \If {$s.bestPlan.contains(s)$} {
            $atLeastOne = true$ \;
            \If {$v.bestCut != null$} {
                \If {$s.bestCut == null$} {
                    $v.bestCut = null$ \;
                }\lElse {
                     $v.bestCut = v.bestCut \cup s.bestCut$ 
                }
            }
        }
    }
    \If {$atLeastOne$} {
        $v.bestPlan=v.bestPlan \cup \{ v \}$ \;
        \If {$v.materialized$} {
            \lIf {$v.bestCut == null\ or\ benefit(v.bestCut, v) < benefit(\{v\}, v)$} {
                $v.bestCut = \{v\}$ 
            }
        }
        \If {$benefit(v.bestCut, v) \geq 0$} { \label{line:alproof}
            $v.bestPlan = REWRITE(v.bestPlan, v.bestCut)$ \;
            $v.bestCut = null$ \;
        }
    }
}

\end{algorithm}
\setlength{\textfloatsep}{0ex}

\subsubsection{Handling adaptive optimization}
\label{parcur:sec:integration-roulette}

We implement \parcur\ by extending \roulette, which uses adaptive sharing-aware optimization. \roulette\ splits batch execution into episodes, which last for the duration of processing one small base table vector each, and potentially uses a different global plan in each episode. \roulette\ learns the cost of different subplans across episodes and eventually converges into an efficient global plan.

The episode-oriented design conflicts with two-phase optimization: the reuse phase chooses subexpressions to reuse  based on the baseline plan at partition granularity, whereas \parcur\ switches between multiple baseline plans during a partition's execution. We reconcile the two using the concept of \textit{mini-partitions}. Mini-partitions are horizontal splits of 1-st level partitions and are internally organized using 2-nd level blocks. \parcur\ splits the base table's 1-st level partitions into fixed-size mini-partitions and then splits materializations such that tuples derived from the same base table mini-partition are clustered together.

\parcur\ makes reuse decisions at the mini-partition granularity. When accessing a mini-partition for the first time, \parcur\ chooses a baseline plan and makes two decisions: i) it decides whether the baseline plan is stable, i.e.,  it checks whether it is still learning the cost of the used subplans by tracking changes in cost estimates, and ii) if the plan is stable, it uses the reuse phase to choose materializations to use. Then, until the mini-partition is finished, \parcur\ retains the reuse decisions and optimizes the downstream computations for each reused subexpression independently.

\begin{figure}[t]
 \centering
	\includegraphics[width=0.7\hsize]{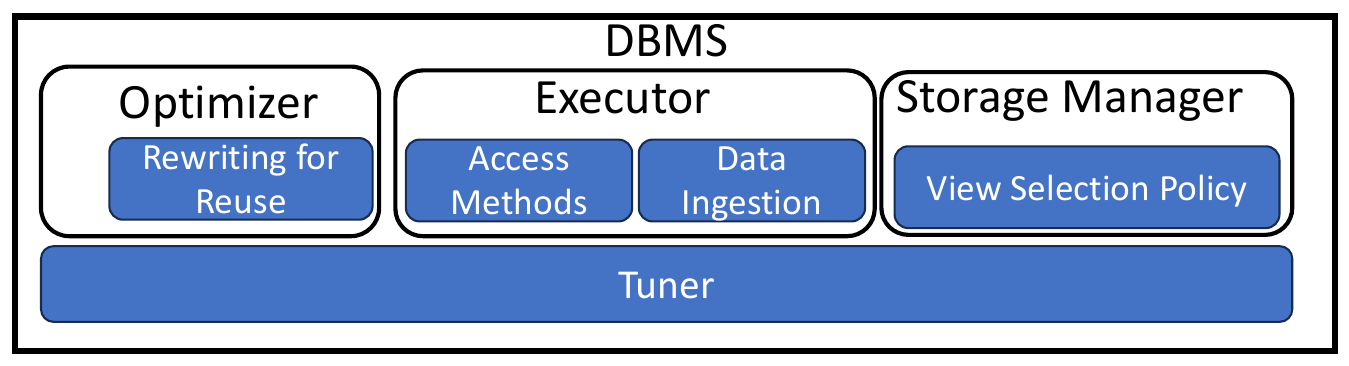}
 \vspace{-3mm}
    \caption{DBMS components that \parcur\ modifies}
	\label{parcur:fig:dbms}
    \vspace{-1mm}
\end{figure}

\vspace{-0.1in}
\section{Implementation}
\label{parcur:sec:impl}

We implement \parcur\ on RouLette \cite{sioulas2021scalable}. We modify the \textit{policy} and \textit{ingestion} components, we introduce a materialization operator, and add the tuner's utilities. In general, as shown in Fig \ref{parcur:fig:dbms}, \parcur\ interacts with i) the optimizer to receive a plan and rewrite it, ii) with the executor in order to achieve partition-at-a-time execution and use the available access methods and iii) with the storage manager so that it selects the right views for materialization.

\textbf{Tuning the Cost Model}. The presented techniques rely on \roulette's cost models. We use the same constant factors and also introduce the new constant $c_f$ (Sec \ref{parcur:sec:injecting-reuse}) which is set to $c_f=139.45$ after using regression to fit filtering cost-estimates.

\textbf{Tuning Partitioning}. To tune the parameters for the two levels of partitioning, we use the workload of Figure \ref{parcur:fig:sensitivity}a and we find the minimum values for mini-partition size and block size, and the maximum sampling rate, such that overhead is less than $10\%$ compared to the optimal value. We set the minimum size of mini-partitions to $2^{16}$ to maintain low overhead for the reuse phase and set $PS_{min}=2^{16}$ as 1st-level partitions contain at least one mini-partition. To keep the overhead for data and filter-skipping low, we select the size of mini-partitions to be greater or equal to $2^{16}$ and at least large enough that the blocks contain at least $256$ tuples each on average. Finally, to avoid significant overhead for tracking historical accesses, we set the sampling rate to $1\%$.

\textbf{Limitations}. To combine reuse with adaptive optimization, \parcur's implementation over \roulette\ aligns the mini-partitions of materializations with the mini-partitions of a base table. For this reason, tuning revolves around one main table that defines the partitioning schemes and the materializations. Our implementation is applicable to common workloads such as queries on star and snowflake schemas. Also, our prototype optimizes execution but tuning is single-threaded. Deciding on the frequency of tuning, the amount of resources, and how sync with execution should happen are well-known problems but orthogonal to ours.

\textbf{In-memory vs disk-based}. While \parcur\ relies on an in-memory system, the performance trends are not expected to change if we transition to a disk-based implementation. With modern SSDs and large query batches, data access would still be fast, whereas shared filtering of materialized results would continue to be expensive. Thus, we expect different speedup due to different tradeoffs, but the main insights would still be valid.
\vspace{-0.15in}
\section{Experimental Evaluation}
\label{parcur:sec:evaluation}

The experiments evaluate \parcur\ and show how materialization and reuse enable it to significantly outperform pure work sharing and achieve lower batch response times. Specifically, they demonstrate the following:

\noindent i) Filtering costs when accessing materializations can deteriorate the performance of work sharing, and thus building access methods for materializations is necessary.

\noindent ii) Query-at-a-time materialization policies make suboptimal materialization decisions. Cut selection improves budget utilization by prioritizing materialization with higher marginal benefits.

\noindent iii) Homogeneity-based partitioning reduces the required budget for workloads with selective and correlated patterns.

\noindent iv) Even though filters change, the reuse phase reduces work-sharing's response time when possible and falls back to vanilla work sharing otherwise.

\noindent v) Using partial reuse, the response time is proportional to the required computation and performance degrades gracefully.

\noindent vi) End-to-end, \parcur\ reduces the response time for the full SSBM and TPC-H by $6.4\times$ and $2\times$, respectively.

\textbf{Hardware}. All experiments took place on a single server that features an Intel(R) Xeon(R) Gold 5118 CPU @ 2.30GHz with $2$ sockets, $12(\times2)$ threads per socket, $376$GB of DRAM, $32$KB L1 cache, $1$MB L2 cache and $16$MB L3 cache. All experiments took place in memory, in a single NUMA node, and use $12$ threads.

\textbf{Data \& Workload}. We run both macro- and micro-benchmarks. First, we perform a sensitivity analysis. We evaluate \parcur\ by varying different workload properties: i) the number of filtering attributes, ii) the selectivity of predicates, iii) the number of joins and the overlap between queries, iv) the available budget, and v) the workload shift in filter attributes and predicate correlations. To control the experiment variables, we generate synthetic data in a star schema as well as appropriate queries. We use a fact table of $100M$ rows and $27$ columns ($24$ are foreign keys), $8$ dimensions with $10k$ rows and $9$ columns each, and $16$ dimensions with $10k$ rows and $2$ columns each. All columns are $4$-byte integers. We describe the queries in the presentation of each micro-benchmark.

Next, we show that \parcur\ accelerates the queries of the widely used SSBM \cite{ssb} and TPC-H benchmarks. We use SF10 for both, which is the largest data size for which the optimal materialization fits in memory. We randomize the order of tuples for both datasets.

\textbf{Methodology}. The experiments measure batch response time, which is the end-to-end time for processing the full batch. All measurements are the average of $10$ runs.

\vspace{-3mm}

\subsection{Impact of Reuse in Global Plans}
\label{parcur:sec:reuse-exp}

\begin{figure*}[t]
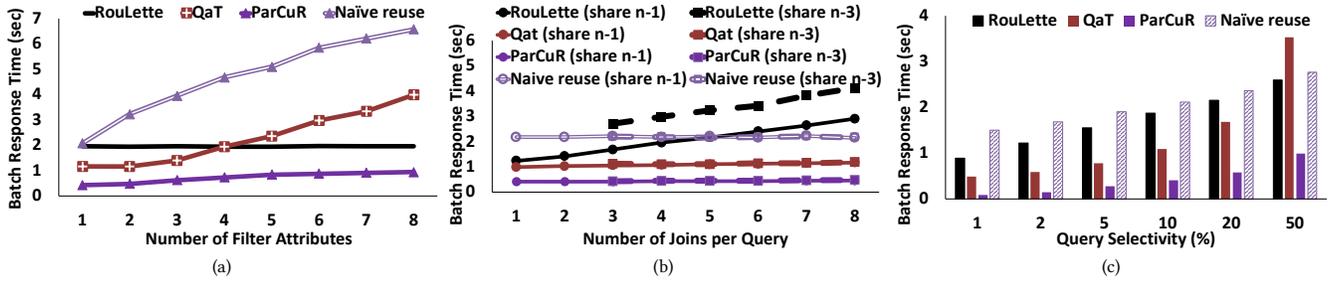

\begin{tabular}{@{}l@{}l}
\begin{minipage}{0.33\hsize}
    \centering\includegraphics[width=\hsize,page=14]{parcur/figures/parkour_graphs_sep.pdf}	\\
    \vspace{-1mm}
        \centering {\footnotesize(a)}
        \end{minipage}
        &
    \begin{minipage}{0.33\hsize}
\centering\includegraphics[width=\hsize,page=13]{parcur/figures/parkour_graphs_sep.pdf}	\\
\vspace{-1mm}
    \centering {\footnotesize(b)}
    \end{minipage}
        \begin{minipage}{0.33\hsize}
\centering\includegraphics[width=\hsize,page=15]{parcur/figures/parkour_graphs_sep.pdf}	\\
\vspace{-1mm}
    \centering {\footnotesize(c)}
    \end{minipage}
\end{tabular}
\vspace{-2mm}
\caption{\label{parcur:fig:sensitivity} Impact of reuse based on workload parameters in a) filters, b) joins, and c) selectivity.}
\vspace{-2mm}
\end{figure*}



We evaluate the benefit of reuse to shared execution's response time. We assume that the tuner's workload is the same as the runtime workload and that the materializations that minimize response time are available (i.e., the top-level joins). Sections \ref{parcur:sec:caching-exp} and \ref{parcur:sec:work-shift-exp} lift the two assumptions. We compare \parcur\ against \roulette, naive reuse, which eagerly injects materializations and has no access methods, and QaT execution using \roulette, which is on par with QaT performance of state-of-the-art in-memory databases.

\noindent \textbf{Filter processing}. We examine the impact of filters and the need for building and using access methods for materializations. We use $64$ queries generated from $4$ different templates. The templates have $4$ dimension joins each, and all templates share $3$ dimension joins. The queries have $10\%$ selectivity and filter on the non-shared dimension. We vary the number of filter attributes (which is equal to the number of shared filter operators) from $1$ to $8$.

Figure \ref{parcur:fig:sensitivity}a shows that access methods are necessary for accelerating work sharing. When using access methods, \parcur's response time is $2.07$-$4.57\times$ lower than \roulette's, as it eliminates join processing. \roulette\ is almost unaffected by increasing filter operators, as it processes filters on the dimension. \parcur\ and QaT are affected because they require more 2-nd level partitions and hence both more zone-map operations as well as larger mini-partitions, and thus longer time until \parcur\ decides that the plan is stable. However, this effect just reduces \parcur's benefit over \roulette. By contrast, the performance of naive reuse deteriorates drastically: it computes filters over the materialization and thus their processing time is amplified. The response time is increased with the number of filters and is up to $3.34\times$ than \roulette's.

\noindent \textbf{Takeaway}: Reuse drastically improves performance only if filtering cost is low. Building appropriate access methods is necessary for injecting materializations into global plans.



\noindent \textbf{Number of joins}. We examine the impact of reuse in queries with different join costs. We use two variants of the previous workload, one where all templates share all but one join (share n-1) and another where they share all but three joins (share n-3). We vary the total number of joins per query. All queries use $1$ dimension filter.

Figure \ref{parcur:fig:sensitivity}b shows larger benefits for global plans with more joins. Reuse-based approaches are insensitive to the number of joins, whereas \roulette's response time is increased. \parcur\ achieves maximum speedup of $6.33$ for share n-1 and $8.60$ for share n-3. Also, there is a cross-point in naive reuse, where processing filters becomes preferable to large joins.

\noindent \textbf{Takeaway}: The benefit from reuse is proportional to the eliminated computation. Hence, the speedup is higher when eliminated computation is significant, such as in join-heavy queries.

\noindent \textbf{Selectivity}. We examine the impact of reuse for queries with different selectivity. We use the same workload as in the first experiment, use one filter attribute, and vary the selectivity ($1\%$, $2\%$, $5\%$, $10\%$, $20\%$, $50\%$). The experiment models the impact of downstream processing.

Figure \ref{parcur:fig:sensitivity}c shows larger benefits when each query's selectivity is low. As aggregations  are not affected by reuse, they close the gap between approaches for larger selectivity when they are expensive. Also, it is noteworthy that when aggregations are heavy enough, QaT is more expensive than \roulette\ due to concurrency.

\noindent \textbf{Takeaway}: Reuse has a higher benefit when it eliminates the most expensive part of the global plan. Low selectivity results in low-cost final aggregation, and thus the relative benefit is more pronounced. Nevertheless, reuse is the best approach across all selectivities.

\vspace{-3mm}

\subsection{Sharing-aware Materialization Policy}
\label{parcur:sec:caching-exp}

\begin{figure}[t]
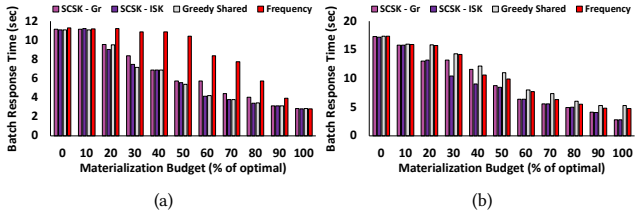

\begin{tabular}{@{}l@{}l}
\begin{minipage}{0.5\hsize}
    \includegraphics[width=\hsize,page=19]{parcur/figures/parkour_graphs_sep.pdf}	\\
    \vspace{-1mm}
        \centering {\footnotesize(a)}
        \end{minipage}
        &
    \begin{minipage}{0.5\hsize}
\includegraphics[width=\hsize,page=20]{parcur/figures/parkour_graphs_sep.pdf}	\\
\vspace{-1mm}
    \centering {\footnotesize(b)}
    \end{minipage}
\end{tabular}
\vspace{-2mm}
\caption{\label{parcur:fig:budget-ab} Impact of budget for workloads A and B.}
\vspace{-2mm}
\end{figure}


\begin{figure}[t]
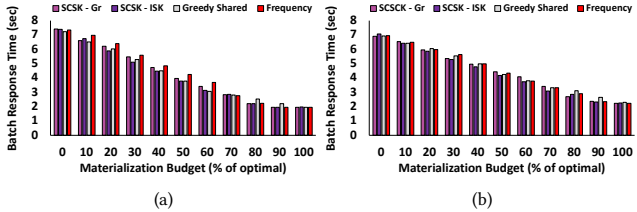

\begin{tabular}{@{}l@{}l}
\begin{minipage}{0.5\hsize}
    \includegraphics[width=\hsize,page=21]{parcur/figures/parkour_graphs_sep.pdf}	\\
    \vspace{-1mm}
        \centering {\footnotesize(a)}
        \end{minipage}
        &
    \begin{minipage}{0.5\hsize}
\includegraphics[width=\hsize,page=22]{parcur/figures/parkour_graphs_sep.pdf}	\\
\vspace{-1mm}
    \centering {\footnotesize(b)}
    \end{minipage}
\end{tabular}
\vspace{-2mm}
\caption{\label{parcur:fig:budget-abp} Impact of budget for workloads B-P1 and B-P2.}
\end{figure}


We demonstrate cut selection solutions outperform sharing-oblivious and simple sharing-aware policies. We compare four different algorithms: a) \textbf{SCSK-Gr} solves cut selection using Gr, b) \textbf{SCSK-ISK} solves cut selection using ISK, c) \textbf{Greedy Shared} solves a submodular knapsack problem for individual materializations, and d) \textbf{Frequency} solves the submodular knapsack problem where benefits are weighted by frequency, which is commonly used for query-at-a-time materialization. The evaluation uses four different workloads with $512$ queries with $10\%$ selectivity each. The queries use filters in a column with domain $[0, 100)$. At the end of each workload, we mention the amount of DRAM it requires to minimize response time.

- \emph{Workload A} shows the impact of frequency. It uses 8 query templates ($t_1, \dots, t_8$). $t_1, \dots, t_4$ have $1$ join each, whereas $t_5, \dots, t_8$ have $4$. Template $t_i$ shares its join with $t_{i+4}$. The workload contains $112$ queries from each of  $t_1, \dots, t_4$ and $16$ queries from $t_5, \dots, t_8$. Requires at least $40$GB.

- \emph{Workload B}: it uses 8 query templates ($t_5, \dots, t_{12}$). $t_9, \dots, t_{12}$ also have $4$ joins each. Template $t_i$ shares $2$ joins with template $t_{i+4}$. The workload contains $64$ queries from each of the templates. The workload shows the impact of synergy. Requires at least $32$GB.

- \emph{Workload B-P1}: it uses workload B's templates. However, the filters for $t_5$ and $t_6$ are subranges of $[0, 40)$, for $t_9$ and $t_{11}$ subranges of $[20, 60)$, for $t_7$ and $t_{8}$ subranges of $[40, 80)$, and for $t_{10}$ and $t_{12}$ subranges of $[60, 100)$. Requires at least $14.9$GB.

- \emph{Workload B-P2}: Similar to workload B-P1, but uses 2-D ranges. The filters for $t_5$ and $t_6$ are subranges of $[0, 66)\times [0, 66)$, for $t_9$ and $t_{11}$ subranges of $[0, 66)\times [34, 100)$, for $t_7$ and $t_{8}$ subranges of $[34, 100)\times [0, 66)$, and for $t_{10}$ and $t_{12}$ subranges of $[34, 100)\times [34, 100)$. Requires at least $12.9$GB.

In each experiment, we vary the storage budget to the minimum that can minimize response time. We present the used budget normalized by the one that minimizes response time (i.e., 100\%). Vanilla \roulette\ corresponds to 0\% budget for all policies.

\noindent \textbf{Sharing-awareness}: Figure \ref{parcur:fig:budget-ab}a shows that sharing-aware policies outperform Frequency in workload A because they factor out the frequency of occurrence for subqueries, and decide based on shared costs. Frequency results in up to $2.03\times$ higher response time for the same budget because it prioritizes templates $t_1, \dots, t_4$. 

\noindent \textbf{Synergy-awareness}: Figure \ref{parcur:fig:budget-ab}b shows that exploiting the synergy between materializations that compose cuts in workload B improves the effectiveness of materializations. Both Greedy Shared and Frequency preferentially materialize the shared subqueries because they miss the synergy between the larger cuts. Thus, they both waste budget on materializing subexpressions that are later covered by the larger cuts, and consequently, $100$\% is not sufficient for minimizing response times. At $100\%$, they are slower by $1.87\times$ and $1.68\times$, respectively.

\noindent \textbf{Partition-awareness}: For both workload B-P1 and B-P2, partitioning reduces the required budget for minimizing response times by $2.4\times$ and $2.5\times$ accordingly. Figure \ref{parcur:fig:budget-abp} shows that all algorithms achieve comparable performance because partitioning simplifies the global plans for each partition. The simplification mitigates the effect of synergy and frequency, and thus all algorithms find comparable solutions.

\noindent \textbf{Gr vs ISK}: Across all experiments,  ISK performs better than Gr as it enumerates more materializations and normalizes marginal benefit by the required budget. By contrast, Gr suffers from suboptimal solutions when it uses up the budget on few materializations. Still, ISK requires significant processing time to run, e.g., $217$sec in workload B-P1, and thus Gr is preferable for real-time analysis as it takes up to $4$msec in all experiments.

\noindent \textbf{Takeaway}: Both sharing-awareness and partitioning improve budget utilization. Incorporating both shared costs and synergy permits spending the budget for materializing only the subexpressions that actually reduce response times. Furthermore, partitioning enables materializing results just for the data ranges where they are needed and thus reduces budget requirements.

\subsection{Effect of workload shift in reuse}
\label{parcur:sec:work-shift-exp}

\begin{figure}[t]
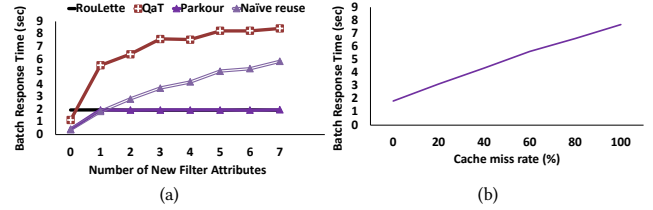

\begin{tabular}{@{}l@{}l}
\begin{minipage}{0.5\hsize}
    \centering\includegraphics[width=\hsize,page=16]{parcur/figures/parkour_graphs_sep.pdf}	\\
    \vspace{-1mm}
        \centering {\footnotesize(a)}
        \end{minipage}
        &
        \begin{minipage}{0.5\hsize}
\centering\includegraphics[width=\hsize,page=6]{parcur/figures/parkour_graphs_sep.pdf}	\\
\vspace{-1mm}
    \centering {\footnotesize(b)}
    \end{minipage}
\end{tabular}
\vspace{-2mm}
\caption{\label{parcur:fig:work-shift} Impact of workload shift in a) filtering attributes, and b) query patterns' predicates.}
\end{figure}




We evaluate \parcur\ under workload shift. We materialize subexpressions that minimize the response time for the original workload. The experiments shift workload across two axes: a) by adding new filtering attributes, and b) by shifting query pattern predicates. We compare \parcur\ against \roulette{}, naive reuse (for which we enable access methods), and QaT.

\noindent \textbf{Filtering attributes}. Figure \ref{parcur:fig:work-shift}a shows that the reuse phase judiciously chooses between reuse and recomputation based on filtering costs. The experiment uses the same workload as Figure \ref{parcur:fig:sensitivity}a. We assume that the original workload is the batch with one filtering attribute, hence we only build an access method for that attribute. Naive reuse improves response time when there is no shift and deteriorates performance otherwise. QaT's performance depends on the percentage of queries that use the materializations. Finally, \parcur\ improves performance when there is no shift and achieves the same performance as work sharing when reuse is detrimental.


\noindent \textbf{Query patterns' predicates}. Figure \ref{parcur:fig:work-shift}b shows that partial reuse enables response times to degrade gracefully under workload shift. The experiment uses workload B-P1 to build materializations. The shifted workload slides the ranges for the filters of each template; the slide controls the percentage of the shifted workload's input that cannot reuse materializations and is processed from base data (miss rate). \parcur's response time is increased proportionally to the miss rate. Thus, when partitioning captures query patterns and isolates misses, partial reuse improves performance against all-or-nothing approaches that fall back to full processing (same performance as 100\% miss rate).

\noindent \textbf{Takeaway}: The reuse phase, as well as partitioned execution, enable \parcur\ to benefit from materializations despite workload shifts. \parcur\ exploits materializations for the partitions where they are available and beneficial to reducing the global plan's cost.

\subsection{Macro-benchmarks}


\begin{figure}[t]
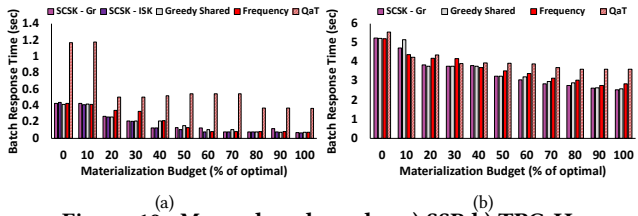

\begin{tabular}{@{}l@{}l}
\begin{minipage}{0.5\hsize}
    \includegraphics[width=\hsize,page=23]{parcur/figures/parkour_graphs_sep.pdf}	\\
    \vspace{-1mm}
        \centering {\footnotesize(a)}
        \end{minipage}
        &
    \begin{minipage}{0.5\hsize}
\includegraphics[width=\hsize,page=24]{parcur/figures/parkour_graphs_sep.pdf}	\\
\vspace{-1mm}
    \centering {\footnotesize(b)}
    \end{minipage}
\end{tabular}
\vspace{-4mm}
\caption{\label{parcur:fig:macro} Macro-benchmarks: a) SSB b) TPC-H.}
\end{figure}

We evaluate \parcur\ using the SSBM and TPC-H benchmarks, which contain 13 and 22 queries respectively. For each benchmark, we compare the four materialization algorithms  and vary the storage budgets. We omit ISK for TPC-H, because it takes a very long to choose a materialization.

\noindent \textbf{SSBM}: Figure \ref{parcur:fig:macro}a shows that \parcur\ achieves a maximum speedup of $6.4$ over \roulette, which corresponds to 0\% budget, and $5.4$ over QaT, and requires around $1$GB for the optimal materialization. The speedup is high because queries are mostly selective, and thus aggregations make up a small percentage of processing time; the vast majority is filters and joins. An interesting observation is that even a small budget, at 20\%, brings about a 37\% decrease in response time because bottom joins are significantly more expensive, whereas upper joins are more selective and less time-consuming.

\noindent \textbf{TPC-H}: Figure \ref{parcur:fig:macro}b shows that \parcur\ achieves a maximum speedup of $2\times$ over \roulette\ and $1.37\times$ over QaT, and requires $69$GB for the optimal materialization. The speedup is lower compared to SSB for two reasons: i) TPC-H also contains less selective queries with heavier aggregations. When using $100$\% budget, aggregation takes up around $40$\% of the processing time. ii) TPC-H contains LIKE predicates that filter skipping cannot eliminate using zone-maps. Even so, despite the shortcomings in our implementation, \parcur\ eliminates significant join costs.

\noindent \textbf{Discussion}: For the two benchmarks, \parcur\ requires large materializations because, our homogeneity-based partitioning does not exploit filters on dimensions. This limitation can be addressed by: i) partitioning using the denormalized table \cite{yang2020qd}, or ii) using data-induced predicates on the fact table's foreign keys \cite{kandula2019}. Both techniques are 
straightforward to integrate with \parcur.

Another limitation is that \parcur\ cannot eliminate predicates such as LIKE, multi-attribute expressions, or UDFs using zone-maps. To eliminate such predicates, partitions require additional metadata. Sun et al. \cite{sun2014fine} handle such predicates by maintaining a feature vector that encodes whether complex predicates  are satisfied.

\vspace{-0.1in}
\section{Related Work}
\label{parcur:sec:related}

We compare \parcur\ with related work in (i) sharing, (ii) materialization, and (iii) partitioning.

\textbf{Work sharing:} Work sharing exploits overlapping work between queries in order to reduce the total cost of processing. Despite using diverse execution models and optimization strategies, recent work-sharing databases use global plans \cite{harizopoulos2005qpipe, arumugam2010datapath, karimov, giannikis2012shareddb, mqjoin, sioulas2021scalable} and the Data-Query model \cite{cacq, precision-sharing, candea2009scalable, arumugam2010datapath, giannikis2012shareddb, mqjoin, sioulas2021scalable, astream, karimov}. Existing work-sharing databases do not support reuse; they always recompute global plans from scratch. \parcur\ is compatible with such databases, and therefore this work's insights are valuable for reducing their response time for recurring workloads.

\textbf{Reuse:} Reuse occurs in different forms, such as semantic caching \cite{shimcaching, deshpandecaching, darcaching, melvincaching}, recycling \cite{ivanovarecycle, nagelrecycle, tanrecycle, perezviews}, view selection \cite{roussopoulosviews, kalnis2002view, zhang2001evolutionary, mami2011modeling}, and subexpression selection \cite{zhou2007efficient, jindal2018selecting, jindalreuse}. \parcur\ addresses subexpression selection in the context of sharing environments. Sharing affects the data layout, the materialization policy, and the reuse policy for the selected subexpressions. This is the first work that studies the effect of work sharing on reuse. Extending semantic caching, recycling, and view selection for shared execution is a significant direction for future work.

\parcur\ also supports partial reuse. Similar approaches include chunk-based semantic caching \cite{deshpandecaching, darcaching}, partially materialized views \cite{zhou2007dynamic}, partially-stateful dataflow \cite{gjengset2018noria}, and separable operators \cite{yang2021flexpushdown}. However, in all of these approaches, the concurrent outstanding computation can deteriorate performance. \parcur\ both reuses available materializations and uses sharing to improve scalability.

\textbf{Partitioning:} In modern scan-oriented analytical systems, partitioning is an indispensable tool for accelerating selective queries using data skipping \cite{sun2014fine, sun2016skipping}. Existing partitioning strategies focus on minimizing data access. By contrast, \parcur\ chooses a partitioning scheme to maximize reuse while minimizing the space overhead for partition-granularity materialization. Doing so requires that partitioning captures both access and computation patterns.

\section{Conclusions}
\label{parcur:sec:conclusions}
To provide real-time responses for large recurring workloads, we propose \parcur, a novel paradigm that combines the reuse of materialized results with work sharing. \parcur\ addresses the performance pitfalls of incorporating materialized results into shared global plans i) by proposing a multi-level partitioning design that improves at the same time the utilization of the storage budget, partial reuse, and filtering costs, ii) by proposing a novel sharing-aware caching policy that improves materialization decisions, and iii) by enhancing the sharing-aware optimizer with a phase that performs reuse-oriented rewrites in order to minimize runtime processing. In our experiments, \parcur\ outperformed \roulette{} by $6.4\times$ and $2\times$ in the widely-used SSB and TPC-H benchmarks respectively.


\bibliographystyle{ACM-Reference-Format}
\bibliography{bibliography}

\clearpage
\end{document}